\documentclass[a4paper,12pt]{article}

\usepackage{latexsym}
\usepackage{amsfonts,amsmath,amsthm,amssymb}
\usepackage{graphics}
\usepackage{graphicx,color}
\usepackage{bm}
\theoremstyle{definition}

\begin{document}
\baselineskip=20pt

\theoremstyle{plain}
\newtheorem{definition}{Definition}
\newtheorem{theorem}{Theorem}
\newtheorem{lemma}{Lemma}
\newtheorem{corollary}{Corollary}

\renewcommand{\thedefinition}{\thesection.\arabic{definition}}
\renewcommand{\thetheorem}{\thesection.\arabic{theorem}}
\renewcommand{\thelemma}{\thesection.\arabic{lemma}}
\renewcommand{\thecorollary}{\thesection.\arabic{corollary}}

\renewcommand{\vec}[1]{{\bm{#1}}}

\newcommand{\indep}{\mathop{\perp\!\!\perp}}

\renewcommand{\theequation}{\thesection.\arabic{equation}}

\newcommand{\eref}[1]{(\ref{#1})}

\newcommand{\lref}[1]{Lemma \ref{#1}}

\newcommand{\tref}[1]{Theorem \ref{#1}}

\def\yotte{\mbox{.\raisebox{1ex}{.}.}\,}
\def\yueni{\mbox{\raisebox{1ex}{.}.\raisebox{1ex}{.}}\,}

\newcommand{\Maru}[1]{\raise0.125ex\hbox{\textcircled{\small{#1}}}}

\def\stack#1#2{\stackrel{#1}{#2}}
\def\st{\stack}

\def\0{\vec{{\rm 0}}}
\def\1{\vec{{\rm 1}}}

\def\va{\vec{a}}
\def\vb{\vec{b}}
\def\vc{\vec{c}}
\def\vd{\vec{d}}
\def\ve{\vec{e}}
\def\vf{\vec{f}}
\def\vg{\vec{g}}
\def\vh{\vec{h}}
\def\vi{\vec{i}}
\def\vj{\vec{j}}
\def\vk{\vec{k}}
\def\vl{\vec{l}}
\def\vm{\vec{m}}
\def\vn{\vec{n}}
\def\vo{\vec{o}}
\def\vp{\vec{p}}
\def\vq{\vec{q}}
\def\vr{\vec{r}}
\def\vs{\vec{s}}
\def\vt{\vec{t}}
\def\vu{\vec{u}}
\def\vv{\vec{v}}
\def\vw{\vec{w}}
\def\vx{\vec{x}}
\def\vy{\vec{y}}
\def\vz{\vec{z}}

\def\vA{\vec{A}}
\def\vB{\vec{B}}
\def\vC{\vec{C}}
\def\vD{\vec{D}}
\def\vE{\vec{E}}
\def\vF{\vec{F}}
\def\vG{\vec{G}}
\def\vH{\vec{H}}
\def\vI{\vec{I}}
\def\vJ{\vec{J}}
\def\vK{\vec{K}}
\def\vL{\vec{L}}
\def\vM{\vec{M}}
\def\vN{\vec{N}}
\def\vO{\vec{O}}
\def\vP{\vec{P}}
\def\vQ{\vec{Q}}
\def\vR{\vec{R}}
\def\vS{\vec{S}}
\def\vT{\vec{T}}
\def\vU{\vec{U}}
\def\vV{\vec{V}}
\def\vW{\vec{W}}
\def\vX{\vec{X}}
\def\vY{\vec{Y}}
\def\vZ{\vec{Z}}

\def\valpha{\vec{\alpha}}
\def\vbeta{\vec{\beta}}
\def\vgamma{\vec{\gamma}}
\def\vdelta{\vec{\delta}}
\def\vepsilon{\vec{\epsilon}}
\def\veps{\vec{\varepsilon}}
\def\vzeta{\vec{\zeta}}
\def\veta{\vec{\eta}}
\def\vtheta{\vec{\theta}}
\def\viota{\vec{\iota}}
\def\vkappa{\vec{\kappa}}
\def\vlambda{\vec{\lambda}}
\def\vmu{\vec{\mu}}
\def\vnu{\vec{\nu}}
\def\vxi{\vec{\xi}}
\def\vpi{\vec{\pi}}
\def\vrho{\vec{\rho}}
\def\vsigma{\vec{\sigma}}
\def\vtau{\vec{\tau}}
\def\vupsilon{\vec{\upsilon}}
\def\vphi{\vec{\phi}}
\def\vchi{\vec{\chi}}
\def\vpsi{\vec{\psi}}
\def\vomega{\vec{\omega}}
\def\vvartheta{\vec{\vartheta}}
\def\vvarpi{\vec{\varpi}}
\def\vvarrho{\vec{\varrho}}
\def\vvarsigma{\vec{\varsigma}}
\def\vvarphi{\vec{\varphi}}

\def\vGamma{{\bf {\Gamma}}}
\def\vDelta{{\bf {\Delta}}}
\def\vTheta{{\bf {\Theta}}}
\def\vLambda{{\bf {\Lambda}}}
\def\vXi{{\bf {\Xi}}}
\def\vPi{{\bf {\Pi}}}
\def\vSigma{{\bf {\Sigma}}}
\def\vUpsilon{{\bf {\Upsilon}}}
\def\vPhi{{\bf {\Phi}}}
\def\vPsi{{\bf {\Psi}}}
\def\vOmega{{\bf {\Omega}}}

\def\mA{{\mathcal A}}
\def\mB{{\mathcal B}}
\def\mC{{\mathcal C}}
\def\mD{{\mathcal D}}
\def\mE{{\mathcal E}}
\def\mF{{\mathcal F}}
\def\mG{{\mathcal G}}
\def\mH{{\mathcal H}}
\def\mI{{\mathcal I}}
\def\mJ{{\mathcal J}}
\def\mK{{\mathcal K}}
\def\mL{{\mathcal L}}
\def\mM{{\mathcal M}}
\def\mN{{\mathcal N}}
\def\mO{{\mathcal O}}
\def\mP{{\mathcal P}}
\def\mQ{{\mathcal Q}}
\def\mR{{\mathcal R}}
\def\mS{{\mathcal S}}
\def\mT{{\mathcal T}}
\def\mU{{\mathcal U}}
\def\mV{{\mathcal V}}
\def\mW{{\mathcal W}}
\def\mX{{\mathcal X}}
\def\mY{{\mathcal Y}}
\def\mZ{{\mathcal Z}}

\def\vmA{\vec{\mathcal A}}
\def\vmB{\vec{\mathcal B}}
\def\vmC{\vec{\mathcal C}}
\def\vmD{\vec{\mathcal D}}
\def\vmE{\vec{\mathcal E}}
\def\vmF{\vec{\mathcal F}}
\def\vmG{\vec{\mathcal G}}
\def\vmH{\vec{\mathcal H}}
\def\vmI{\vec{\mathcal I}}
\def\vmJ{\vec{\mathcal J}}
\def\vmK{\vec{\mathcal K}}
\def\vmL{\vec{\mathcal L}}
\def\vmM{\vec{\mathcal M}}
\def\vmN{\vec{\mathcal N}}
\def\vmO{\vec{\mathcal O}}
\def\vmP{\vec{\mathcal P}}
\def\vmQ{\vec{\mathcal Q}}
\def\vmR{\vec{\mathcal R}}
\def\vmS{\vec{\mathcal S}}
\def\vmT{\vec{\mathcal T}}
\def\vmU{\vec{\mathcal U}}
\def\vmV{\vec{\mathcal V}}
\def\vmW{\vec{\mathcal W}}
\def\vmX{\vec{\mathcal X}}
\def\vmY{\vec{\mathcal Y}}
\def\vmZ{\vec{\mathcal Z}}

\def\tr{\mbox{\rm tr}}
\def\rank{\mbox{\rm rank}}
\def\diag{\mbox{\rm diag}}
\def\sign{\mbox{\rm sign}}

\def\MSE{\mbox{\rm MSE}}
\def\PMSE{\mbox{\rm PMSE}}
\def\RSS{\mbox{\rm RSS}}
\def\PRSS{\mbox{\rm PRSS}}
\def\Var{\mbox{\rm Var}}
\def\Cov{\mbox{\rm Cov}}
\def\Cor{\mbox{\rm Cor}}
\def\rmvec{\mbox{\rm vec}}
\def\Loss{\mbox{\rm Loss}}
\def\Diag{\mbox{\rm Diag}}

\def\thefootnote{\fnsymbol{footnote}}

\def\argmin{\arg\!\min}

\def\dsum{\displaystyle\sum}
\def\dprod{\displaystyle\prod}

\begin{center}
{\Large 
Plug-in Optimization Method for Penalty Parameters \\in Nonparametric GMANOVA model
}\\
(Last Modified: 2026.9.30.)
\\
 Isamu {\sc Nagai}\footnote{Corresponding author, E-mail: {\it inagai@lets.chukyo-u.ac.jp}}
\\[1.5mm]
\begin{small}
	{\it
	Faculty of Liberal Arts and Sciences, Chukyo University\\[-2mm]
	101-2 Yagoto Honmachi, Showa-ku, Nagoya, Aichi 466-8666 Japan
	}
\end{small}
\end{center}

\begin{abstract}
\baselineskip=17pt
In order to estimate the longitudinal trend, we often use the generalized multivariate analysis of variance (GMANOVA) model (Potthoff \& Roy, 1964), when the longitudinal data is balanced data.
Usually, we use this GMANOVA model with some polynomial at the time of measurement. 
However, when the longitudinal trend has flexible curve, we can not derive good fitting estimated curve when we use some polynomial curves.
Then, Nagai (2011) proposed the nonparametric GMANOVA model which uses on several known basis functions instead of using the polynomial curves.
If we use several basis functions, then overfitting problem is occurred.
Nagai (2011) also proposed the estimation method for avoiding overfitting and unstable problems, and reducing computational iterative algorithm by extending the generalized ridge regression model (Yanagihara, Nagai \& Satoh, 2009).
In the present paper, we extend one of the optimization methods in Nagai, Yanagihara and Satoh (2012) into the estimation method in Nagai (2011).
Through numerical studies, we show some properties of each optimization method.
\end{abstract}
\noindent 
{\it Key words}:
Generalized ridge regression;
GMANOVA model;
Longitudinal trend;
Non-iterative estimator;
Plug-in optimization method;
Repeat plug-in optimization method;
Varying coefficient model.

\vskip 8pt
\centerline{\bf \large 1. Introduction}
\setcounter{section}{1}
\setcounter{equation}{0}
\vskip 8pt

The longitudinal data are repeatedly taken on each individual over time.
In the present paper, we consider these data are taken at the same timing over all individuals.
This data is called a balanced data.
For analyzing this data, the generalized multivariate analysis of variance (GMANOVA) model (Potthoff \& Roy, 1964) is often used.
This model can be seen as a varying coefficient model with some assumption (Satoh \& Yanagihara, 2010).

The GMANOVA model is written as follows;
\begin{eqnarray}
\vY=\1_n\vmu'\vX'+\vA\vXi\vX'+\vmE,
\label{Eq13}
\end{eqnarray}
where $\vY=(\vy_1,\ldots,\vy_n)'$ is a known $n \times p$ matrix, $\vy_i$ is a longitudinal data of $i$th individual,
$\vA$ is a known $n \times k$ matrix whose $i$th row is organized from the $i$th individual's explanatory variables,
$\vX=(\vx(t_1),\ldots,\vx(t_p))'$ is a known $p \times q$ matrix which is made from some function of the $j$th measurement time $t_j$,
$\vmu$ is an unknown $q$-dimensional vector,
$\vXi$ is an unknown $p \times q$ matrix,
$\vmE=(\veps_1,\ldots,\veps_n)'$ is an $n \times p$ error matrix,
and $\1_n$ is $n$-dimensional one vector.
In the present paper, we assume $E[\veps_i]=\0_p$, $\Cov[\veps_i]=\vSigma$ ($i=1,\ldots,n$) which is an unknown positive definite $p \times p$ matrix, $\veps_i \indep \veps_j$ ($i\neq j$), $n-k-p-1>0$, $\rank(\vX)=q$, $\rank(\vA)=k$ and $\vA'\1_n=\0_k$ where $\0_r$ is a $r$-dimensional zero vector.
Here, $\vA'\1_n=\0_k$ means each column of $\vA$ is centralized, respectively.
Addition to this, we also assume $\rank((\vI_n-\1_n\1_n'/n-\vA(\vA'\vA)^{-1}\vA')\vY)=p$.

In the ordinary GMANOVA model, $\vx(t)$ is set as $(1,t,\ldots,t^{q-1})$.
Then, by using thise $\vx(t)$, we can estimate the longitudinal trend and varying coefficient curve by using the $(q-1)$-th degree polynomial at the time of measurement.
Since this ordinary method can not estimate the trend and curve when they have flexible curves, Nagai (2011) proposed the nonparametric estimation method by using several known basis functions instead of using the polynomial curves.
Their method is based on Riedel and Imre (1993) and Kshirsagar and Smith (1995).
At first, Nagai (2011) proposed the least squares estimators for $\vmu$ and $\vXi$ as $\hat{\vmu}=n^{-1}(\vX'\vX)^{-1}\vX'\vY'\1_n$ and $\hat{\vXi}=(\vA'\vA)^{-1}\vA'\vY\vX(\vX'\vX)^{-1}$, and we call them as LSEs.
When we use $\hat{\vmu}$ and $\hat{\vXi}$ based on the known basis function, there is overfitting problem.
Together with this, if there is high correlation columns in $\vA$, then $\hat{\vXi}$ becomes unstable.
In ordinary penalize idea, we use some penalized estimation method in order to avoid these problems.
However, this ordinary idea needs many iterative computational algorithm for obtaining estimator and optimizing penalty term.
Nagai (2011) also proposed the penalized method for avoiding overfitting and unstable problems in order to reduce computational algorithm.
Their method is derived by extending the generalized ridge regression (see, e.g., Yanagihara, Nagai and Satoh (2011), and Hoerl and Kennard (1970)).

In Nagai (2011), there is $C_p$-type information criterion for optimizing penalty parameters.
On the other hand, in the generalized ridge regression, there are another optimization method for the penalty parameters  is not only $C_p$ criterion but also other plug-in type optimization method (Nagai, Yanagihara \& Satoh, 2012).
In the present paper, we propose the plug-in optimization method by extending one of their optimization methods.
Furthermore, we propose the repeat plug-in optimization method.
Moreover, we prove the magnitude relationship between the repeat plug-in optimized parameter under some assumptions.

The remainder is organized as follows;
In Section 2, we illustrate the penalized estimator which is proposed by Nagai (2011).
Further, we also introduce the $C_p$-type criterion.
In Section 3, we propose a new optimization method based on the plug-in optimization method.
As well as this, we propose a repeat plug-in optimization method and derive some properties in them.
In Section 4, we compare these proposed estimation methods and LSEs.
We conclude this paper in Section 5.
Technical details  about some calculation are in Appendix.

\vskip 8pt
\centerline{\bf \large 2. Penalized Estimator and $C_p$-type criterion}
\setcounter{section}{2}
\setcounter{equation}{0}
\vskip 8pt

There is overfitting problem when we estimate the longitudinal trend or varying coefficient curves based on some known basis functions with using LSEs.
In addition, when there are high correlation columns in $\vA$, $\hat{\vXi}$ becomes unstable.
In order to avoid these problems, Nagai (2011) proposed the penalized estimator as follows;
\begin{eqnarray*}
\begin{tabular}{l}
$\hat{\vmu}_\lambda=(\vX'\vX+\lambda\vK)^{-1}\vX'\vY'\1_n/n$,\\[1mm]
$\hat{\vXi}_{\vtheta,\lambda}=(\vA'\vA+\vQ\vTheta\vQ')^{-1}\vA'\vY\vX(\vX'\vX+\lambda\vK)^{-1},$
\end{tabular}
\label{Eq21}
\end{eqnarray*}
where $\lambda$ is a nonnegative penalty parameter,
$\vTheta=\diag(\vtheta)$,
$\vtheta=(\theta_1,\ldots,\theta_k)'$,
$\theta_i$ is also a nonnegative penalty parameter,
$\vK$ is a known $q \times q$ positive definite penalty matrix,
and $\vQ$ is an orthogonal matrix which diagonalizes $\vA'\vA$, that is $\vQ'\vA'\vA\vQ=\vD$ with $\vQ'\vQ=\vQ\vQ'=\vI_k$ where $\vD=\diag(d_1,\ldots,d_k)$ and $d_i$ is the $i$th eigenvalue of $\vA'\vA$.
In these estimators, we have to optimize $(k+1)$-parameters in $\vtheta$ and $\lambda$.
Note that $d_i>0$ since we assume $\rank(\vA)=k$.

Using these estimators, we can avoid the overfitting problem when we use basis functions for $\vX$ by using $\lambda$ and $\vK$, and unstably problem by using $\vtheta$.
We note that these estimators coincide with the LSEs (Nagai, 2011) when $\vtheta=\0_k$ and $\lambda\vK=\0_q\0_q'$.
In order to optimize $(k+1)$ parameters in $\vtheta$ and $\lambda$, we consider following the predicted mean squared error (PMSE);
\begin{eqnarray}
\PMSE[\hat{\vY}_{\vtheta,\lambda}]=E_\vY\left[E_\vU\left[\tr\left\{\left(\vU-\hat{\vY}_{\vtheta,\lambda}\right)\vSigma^{-1}\left(\vU-\hat{\vY}_{\vtheta,\lambda}\right)'\right\}\right]\right],
\label{Eq22}
\end{eqnarray}
where $\hat{\vY}_{\vtheta,\lambda}=\1_n\hat{\vmu}_\lambda'\vX'+\vA\hat{\vXi}_{\vtheta,\lambda}\vX'$, $\vU$ is a random matrix which is independent distributed according to the same distribution of $\vY$, and $E_\vV[\cdot]$ means the expectation of the distribution of $\vV$.
Here, the PMSE evaluates the standardized distance between the predicted value $\hat{\vY}_{\vtheta,\lambda}$ and the new response value $\vU$.
We consider estimating the optimal $\vtheta$ and $\lambda$ which minimize $\PMSE[\hat{\vY}_{\vtheta,\lambda}]$.

Here, we note that we need some unknown matrices in order to calculate this PMSE.
As similar as Yanagihara, Nagai and Satoh (2009) and Nagai, Yanagihara and Satoh (2012), we consider several optimization methods two type methods.
One is minimizing the estimator for PMSE which is derived by Nagai (2011).
Another ones is minimizing the PMSE with some unknown values and insert some estimators into there.
In this paper, we propose the second method.
In here, we introduce the first method.

Nagai (2011) proposed the $C_p$-type criterion for estimating \eref{Eq22}.
Further, Nagai (2011) proposed the modified $C_p$ ($MC_p$) criterion with assumption $\vmE\sim N_{n\times p}(\0_n\0_p',\vSigma\otimes\vI_n)$ and $n-k-p-2>0$.
This $MC_p$ criterion is derived by correcting the bias between $C_p$ criterion and $\PMSE[\hat{\vY}_{\vtheta,\lambda}]$ with these assumptions.
(Original $C_p$ criterion for selecting variables in the model \eref{Eq13} with $\vX=\vI_p$ and $p=1$ was proposed by Mallows (1973; 1995), Sparks, Coutsourides and Troskie (1983) also proposed the criterion in the model \eref{Eq13} with $\vX=\vI_p$. Moreover, several authors proposed bias corrected $C_p$-type criteria in e.g., Yanagihara and Satoh (2010), and Fujikoshi and Satoh (1997).)
Neglecting several constant term, their $C_p$-type criteria are expressed as follows;
\begin{eqnarray*}
C_p(\vtheta,\lambda)=&\tr\{(\vY-\hat{\vY}_{\vtheta,\lambda})\vS^{-1}(\vY-\hat{\vY}_{\vtheta,\lambda})'\}+2\tr(\vG_\lambda)\{1+\tr(\vD(\vD+\vTheta)^{-1})\}\\
MC_p(\vtheta,\lambda)=&c_{\rm{\tiny M}}\tr\{(\vY-\hat{\vY}_{\vtheta,\lambda})\vS^{-1}(\vY-\hat{\vY}_{\vtheta,\lambda})'\}+2\tr(\vG_\lambda)\{1+\tr(\vD(\vD+\vTheta)^{-1})\},
\end{eqnarray*}
where $\vG_\lambda=\vX(\vX'\vX+\lambda\vK)^{-1}\vX'$, $\vS=\vY'\{\vI_n-\1_n\1_n'/n-\vA(\vA'\vA)^{-1}\vA'\}\vY/(n-k-1)$ and $c_{\rm{\tiny M}}=1-(p+1)/(n-k-1)$.
Note that $\vG_\lambda$ is a symmetric matrix.
Nagai (2011) also showed that the optimized $\vtheta$ by minimizing these criteria is derived by closed form when $\lambda$ is given.
Moreover, the optimization methods for $\lambda$ and $q$ are also proposed in there.
More details are showed in Nagai (2011).
\if01
On another front, there are another optimization methods in the generalized ridge regression (see, e.g., Nagai, Yanagihara and Satoh (2012)).
These optimization methods do not depend on these criteria.
One of these methods is considered as directly minimizing $\PMSE[\hat{\vY}_{\vtheta,\lambda}]$ and plug-in several LSEs into unknown variables.
In addition to this, these methods use the plug-inned estimator instead of using LSEs.
In this paper, we extend these plug-in optimization methods for our estimators in next subsection.
\fi

\vskip 8pt
\centerline{\bf \large 3. New optimization method based on the Minimizing PMSE}
\setcounter{section}{3}
\setcounter{equation}{0}
\vskip 6pt
\centerline{\bf 3.1. Plug-in optimization}
\vskip 6pt

Since these criteria estimate $\PMSE[\hat{\vY}_{\vtheta,\lambda}]$ in \eref{Eq22}, minimizing each criterion does not always correspond to minimizing $\PMSE[\hat{\vY}_{\vtheta,\lambda}]$.
Here, we consider directly minimizing $\PMSE[\hat{\vY}_{\vtheta,\lambda}]$.

Firstly, since $E_\vY[\vY]=E_\vU[\vU]$, we can see that $$\PMSE[\hat{\vY}_{\vtheta,\lambda}]=E_\vY\left[\tr\left\{\left(\hat{\vY}_{\vtheta,\lambda}-E_\vY[\vY]\right)\vSigma^{-1}\left(\hat{\vY}_{\vtheta,\lambda}-E_\vY[\vY]\right)'\right\}\right]+np.$$
Thus, we consider minimizing the first term since $\vtheta$ and $\lambda$ depend on only it.
We can rewrite the purpose term as follows (in Appendix A.1, we show this equation);
\begin{eqnarray*}
f(\vtheta,\lambda|\vSigma,\vGamma,\vmu)&\stackrel{\rm def.}{=}&\tr\left(\vSigma^{1/2}\vG_\lambda\vSigma^{-1}\vG_\lambda\vSigma^{1/2}\right)+n\vmu'\vX'(\vG_\lambda-\vI_p)\vSigma^{-1}(\vG_\lambda-\vI_p)\vX\vmu\\
&& +\sum_{i=1}^k \varphi_i(\theta_i,\lambda|\vSigma,\vgamma_i)+\tr\left(\vGamma\vX'\vSigma^{-1}\vX\vGamma'\vD\right),
\end{eqnarray*}
since $\vG_\lambda=\vG_\lambda'$, where $\beta_{1,i}(\lambda|\vSigma,\vgamma_i)=d_i\vgamma_i'\vX'\vG_\lambda\vSigma^{-1}\vX\vgamma_i$, $\beta_{2,i}(\lambda|\vSigma,\vgamma_i)=d_i\vgamma_i'\vX'\vG_\lambda\vSigma^{-1}\vG_\lambda'\vX\vgamma_i$, $\alpha_{i}(\lambda|\vSigma,\vgamma_i)=\tr(\vSigma\vG_\lambda$ $\vSigma^{-1}\vG_\lambda')+\beta_{2,i}(\lambda|\vSigma,\vgamma_i)$, $\vGamma=(\vgamma_1,\ldots,\vgamma_k)'=\vQ'\vXi$, and 
\begin{eqnarray}
\varphi_i(\theta_i,\lambda|\vSigma,\vgamma_i)=\left(\dfrac{d_i}{d_i+\theta_i}\right)^2\alpha_{i}(\lambda|\vSigma,\vgamma_i)-\dfrac{2d_i}{d_i+\theta_i}\beta_{1,i}(\lambda|\vSigma,\vgamma_i).
\label{Eq23}
\end{eqnarray}
Thus, minimizing PMSE in \eref{Eq22} coincides with minimizing $\varphi_i(\theta_i,\lambda|\vSigma,\vgamma_i)$ in the above for each $\theta_i$ ($i=1,\ldots,k$) when $\lambda$ is given.
Here, we note that $\alpha_i(\lambda|\vSigma,\vgamma_i)> 0$ from $\tr(\vSigma\vG_\lambda\vSigma^{-1}\vG_\lambda)> 0$ and $\beta_{2,i}(\lambda|\vSigma,\vgamma_i)> 0$ since $\vSigma$ is a positive definite matrix.
Further, even if we use another $p \times p$ positive definite $\vB$ matrix for $\vSigma$, $\beta_{2,i}(\lambda|\vB,\vr)>0$ for any suitable dimension non-zero vector $\vr$.
Then, the optimized $\theta_i$ by minimizing $\varphi_i(\theta_i,\lambda|\vSigma,\vgamma_i)$ for fixed $\lambda$, i.e., $\theta_i^*(\lambda|\vSigma,\vgamma_i)=\arg\!\min_{\theta_i\ge 0}\varphi(\theta_i,\lambda|\vSigma,\vgamma_i)$ is derived as follows (prove of this optimization is in Appendix A.2);
\begin{eqnarray}
\theta_i^*(\lambda|\vSigma,\vgamma_i)
=
\left\{{
\begin{tabular}{ll}
$0$ & (if $\beta_{1,i}(\lambda|\vSigma,\vgamma_i)>\alpha_{i}(\lambda|\vSigma,\vgamma_i)$) \\[1.5mm]
$\dfrac{d_i\{\alpha_{i}(\lambda|\vSigma,\vgamma_i)\!-\!\beta_{1,i}(\lambda|\vSigma,\vgamma_i)\}}{\beta_{1,i}(\lambda|\vSigma,\vgamma_i)}$ & (if $\alpha_i(\lambda|\vSigma,\vgamma_i)\ge\beta_{1,i}(\lambda|\vSigma,\vgamma_i)>0$)\\[1.5mm]
$\infty$ & (otherwise which means $0 \ge \beta_{1,i}(\lambda|\vSigma,\vgamma_i)$)\\[1.5mm]
\end{tabular}}\right..
\label{Eq24}
\label{PI}
\end{eqnarray}
This fact means that there are better estimator for $\vGamma$ or $\vXi$ than the ordinary estimator $\hat{\vXi}=\hat{\vXi}_{\0_k,0}$ based on the PMSE if $\alpha_i(\lambda|\vSigma,\vgamma_i) \ge \beta_{1,i}(\lambda|\vSigma,\vgamma_i)$ is satisfied for some $i$.

Since $\theta_i^*(\lambda|\vSigma,\vgamma_i)$ needs unknown matrices $\vSigma$ and $\vgamma_i$ which is the $i$th row of $\vGamma=\vQ'\vXi$, we substitute $\vS$ and $\hat{\vGamma}=\vQ'\hat{\vXi}$ into them in \eref{Eq24}.
Thus, we estimate $\theta_i^*(\lambda|\vSigma,\vgamma_i)$ as $\hat{\theta}_i(\lambda|\vS,\hat{\vgamma}_i)$ ($i=1,\ldots,k$) where $\hat{\vgamma}_i$ is the $i$th row of $\hat{\vGamma}$.
We denote $\hat{\vtheta}(\lambda|\vS,\hat{\vGamma})=(\hat{\theta}_1(\lambda|\vS,\hat{\vgamma}_1),\ldots,\hat{\theta}_k(\lambda|\vS,\hat{\vgamma}_k))'$.
We do not any iterative computational algorithm for optimizing $\vtheta$ since we can derive $\hat{\vtheta}(\lambda|\vS,\hat{\vGamma})$ in closed form.

The estimator for $\lambda$ is derived by minimizing $C_p(\hat{\vtheta}(\lambda|\vS,\hat{\vGamma}),\lambda)$ or $MC_p(\hat{\vtheta}(\lambda|\vS,\hat{\vGamma}),\lambda)$ based on some $q$.
From this reason, we can derive the optimized penalized parameters frome only one parameter minimization method for fixed $q$.
\if01
Letting $\hat{\lambda}=\arg\!\min_{\lambda \ge 0}f(\hat{\vtheta}(\lambda|\vS,\hat{\vGamma}),\lambda|\vS,\hat{\vGamma}_{\hat{\vtheta}(\lambda|\vS,\hat{\vGamma}),\lambda},\hat{\vmu}_\lambda)$ for fixed $q$, we can select $q$ which is the number of basis functions as $\arg\!\min_{q}f(\hat{\vtheta}(\hat{\lambda}|\vS,\hat{\vGamma}),\hat{\lambda}|\vS,\hat{\vXi}_{\hat{\vtheta}(\hat{\lambda}|\vS,\hat{\vGamma}),\hat{\lambda}},\hat{\vmu}_{\hat{\lambda}})$.
\fi

\vskip 6pt
\centerline{\bf 3.2. Repeat plug-in optimization}
\vskip 6pt

Nagai, Yanagihara and Satoh (2012) also proposed repetition plug-in optimization in order to avoid unstable in their model.
We extend their method into our model and optimization in \eref{Eq24}.
Here, $\hat{\vGamma}_{\vtheta,\lambda}=(\vgamma_1(\vtheta,\lambda),\ldots,\vgamma_k(\vtheta,\lambda))'=\vQ'\hat{\vXi}_{\vtheta,\lambda}=(\vD+\vTheta)^{-1}\vC'\vP\vY\vX(\vX'\vX+\lambda\vK)^{-1}$ where $\vP$ is an $n \times n$ orthogonal matrix which satisfies $\vP\vA\vQ=\vC=(\vD^{1/2},\0_k\0_{n-k}')'$, and note that $\hat{\vGamma}=\hat{\vGamma}_{\0_k,0}$.
Then, we can rewrite $\hat{\vgamma}_i(\vtheta,\lambda)$ as $\hat{\vgamma}_i(\theta_i,\lambda)=\sqrt{d_i}(\vX'\vX+\lambda\vK)^{-1}\vX'\vY'\vP'\ve_i/(d_i+\theta_i)$ where $\ve_i$ is an $n$-dimensional vector with the $i$th element is $1$ and other elements are zeros.
From the above expression, we note $\hat{\vgamma}_i(\vtheta,\lambda)$ is only depending on $\theta_i$ in $\vtheta$.
Add to this, when $0\ge \beta_{1,i}(\lambda|\vSigma,\vgamma_i)$ is satisfied for some $i$, we derive better estimator without using the $i$th eigenvalue $d_i$.

When we use $\hat{\vGamma}$ and $\vS$, we can estimate $\theta_i^*(\lambda|\vSigma,\vgamma_i)$ ($i=1,\ldots,k$) in \eref{PI}.
However, since $\hat{\vGamma}$ also has same problems with $\hat{\vXi}$ in sometimes, the $\hat{\vtheta}(\lambda|\vS,\vGamma)$ is also unstable in then.
Hence, in order to aboid this unstable issue, we use $\hat{\vGamma}_{\hat{\vtheta}(\lambda|\vS,\hat{\vGamma})}$ instead of using $\hat{\vGamma}$.
Then, we renew $\hat{\vtheta}(\lambda|\vS,\hat{\vGamma})$ as $\hat{\vtheta}(\lambda|\vS,\hat{\vGamma}_{\hat{\vtheta}(\lambda|\vS,\hat{\vGamma})})$.
We note that the $i$th element of $\hat{\vtheta}(\lambda|\vS,\hat{\vGamma}_{\hat{\vtheta}(\lambda|\vS,\vGamma)})$ is only depend on $\hat{\vgamma}_i(\hat{\theta}_i(\lambda|\vS,\hat{\gamma}_i),\lambda)$.
Hence, we rewrite the renewaled parameter $\hat{\vtheta}(\lambda|\vS,\hat{\vGamma}_{\hat{\vtheta}(\lambda|\vS,\hat{\vGamma})})$ as $\hat{\vtheta}(\lambda|\vS,\tilde{\vtheta})$ where $\tilde{\vtheta}$ means the previous parameter.
For example, the renewal parameter $\vtheta$ with new parameter $\lambda$, namely $\hat{\vtheta}^{[2]}(\lambda|\vS,\tilde{\vtheta}^{[1]})$ where $\tilde{\vtheta}^{[1]}=(\hat{\theta}_1(\lambda|\vS,\hat{\vgamma}_1),\ldots,\hat{\theta}_k(\lambda|\vS,\hat{\vgamma}_k))'$. 
Furthermore, we can repeat this method as using the estimator for $\vGamma$ based on the previous estimated parameter $\vtheta$ and renewal estimated parameter.
Hereafter, we denote the previous estimated $\vtheta$, which is used for deriving $(s+1)$-step $\vtheta$, as $\tilde{\vtheta}^{[s]}=(\tilde{\theta}_1^{[s]},\ldots,\tilde{\theta}_k^{[s]})'$ for some natural number $s$, and $\tilde{\vtheta}^{[0]}=\hat{\vtheta}^{[0]}=\0_k$ for simple expression.
In summary, we repeat the optimization method as $\hat{\vtheta}^{[s+1]}(\lambda|\vS,\tilde{\vtheta}^{[s]})=(\hat{\theta}_1^{[s+1]}(\lambda|\vS,\tilde{\theta}_1^{[s]}),\ldots,\hat{\theta}_k^{[s+1]}(\lambda|\vS,\tilde{\theta}_k^{[s]}))'$ since the expression for $\hat{\vgamma}_i(\vtheta,\lambda)$ can be rewritten as $\hat{\vgamma}_i(\theta_i,\lambda)$. 

Further, in order to obtain $\hat{\theta}_i^{[s]}(\lambda|\vS,\tilde{\theta}_i^{[s-1]})$, we use the estimators for $\beta_{1,i}(\lambda|\vSigma,\vgamma_i)$,  $\beta_{2,i}(\lambda|\vSigma,\vgamma_i)$ and  $\alpha_{i}(\lambda|\vSigma,\vgamma_i)$ based on using $\tilde{\theta}_i^{[s-1]}$, for each natural number $s$ and $i=1,\ldots,k$.
In similar to the expression of $\hat{\theta}_i^{(s+1)}(\lambda|\vS,\tilde{\theta}_i^{(s)})$, we write  $\hat{\beta}_{1,i}(\lambda|\vS,\tilde{\theta}_i^{[s-1]})$, $\hat{\beta}_{2,i}(\lambda|\vS,\tilde{\theta}_i^{[s-1]})$ and $\hat{\alpha}_i(\lambda|\vS,\tilde{\theta}_i^{[s-1]})$ for the each estimator for $\beta_{1,i}(\lambda|\vSigma,\vgamma_i)$, $\beta_{2,i}(\lambda|\vSigma,\vgamma_i)$ and $\alpha_{i}(\lambda|\vSigma,\vgamma_i)$, respectively.
More detail expressions for these estimators are derived as $\hat{\beta}_{1,i}(\lambda|\vS,\theta_i)=d_i^2\ve_i'\vP\vY\vG_\lambda^2\vS^{-1}\vG_\lambda\vY'\vP'\ve_i/(d_i+\theta_i)^2$ and $\hat{\beta}_{2,i}(\lambda|\vS,\theta_i)=d_i^2\ve_i'\vP\vY\vG_\lambda^2\vS^{-1}\vG_\lambda^2\vY'\vP'\ve_i/(d_i+\theta_i)^2$ by substituting $\hat{\theta}_i^{[s]}(\lambda|\vS,\tilde{\vtheta}^{[s-1]})$ to $\theta_i$ for any natural number $s \ge 2$.
In spite of this, since we use $\hat{\vGamma}$ for $s=1$, these estimators are derived as $\hat{\beta}_{1,i}(\lambda|\vS,\theta_i)=\ve_i'\vP\vY\vG_\lambda\vS^{-1}\vG_0\vY'\vP'\ve_i$ and $\hat{\beta}_{2,i}(\lambda|\vS,\theta_i)=\ve_i'\vP\vY\vG_\lambda\vS^{-1}\vG_\lambda\vY'\vP'\ve_i$ when $s=1$.
Moreover, the estimator for $\alpha_i$ is derived as $\hat{\alpha}_i(\lambda|\vS,\theta_i)=\tr(\vS\vG_\lambda\vS^{-1}\vG_\lambda)+\hat{\beta}_{2,i}(\lambda|\vS,\theta_i)$.

Using these expressions, we obtain the magnitude relations among the $i$th element $\hat{\theta}_i^{[s]}(\lambda|\vS,\tilde{\theta}_i^{[s-1]})$ when $\lambda$ is set as same value for any natural number $s$.
Firstly, if we do not need $\theta_i$ when $d_i$ is too large, we wanna set $\hat{\theta}_i^{[s]}(\lambda|\vS,\tilde{\theta}_i^{[s-1]})$ as $0$ since there corresponding parameter in $\vGamma$ or $\vXi$ is stable.
This is achieved by the following theorem;
\begin{theorem}
$\hat{\theta}_i^{[s+1]}(\lambda|\vS,\tilde{\theta}_i^{[s]})=0$ when $\hat{\theta}_i^{[s]}(\lambda|\vS,\tilde{\theta}_i^{[s-1]})=0$ for some natural number $s \ge 2$.
\label{t1}
\end{theorem}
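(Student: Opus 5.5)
The idea is to use the explicit form of the step-$(s+1)$ estimators as functions of the previous parameter $\theta_i$. This lets us compare step $s+1$ with step $s$ directly.

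Recall the setup for $s\ge 2$. The estimators at step $s+1$ are computed from $\tilde{\theta}_i^{[s]}$ through
\[
\hat{\beta}_{1,i}(\lambda|\vS,\theta_i)=\frac{d_i^2\, b_{1,i}}{(d_i+\theta_i)^2},\qquad
\hat{\beta}_{2,i}(\lambda|\vS,\theta_i)=\frac{d_i^2\, b_{2,i}}{(d_i+\theta_i)^2}.
\]
Here $b_{1,i}=\ve_i'\vP\vY\vG_\lambda^2\vS^{-1}\vG_\lambda\vY'\vP'\ve_i$ and $b_{2,i}=\ve_i'\vP\vY\vG_\lambda^2\vS^{-1}\vG_\lambda^2\vY'\vP'\ve_i$, and neither depends on $\theta_i$. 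Also $\hat{\alpha}_i=\tr(\vS\vG_\lambda\vS^{-1}\vG_\lambda)+\hat{\beta}_{2,i}$.

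By the rule \eref{Eq24} with estimators plugged in, $\hat{\theta}_i^{[s+1]}=0$ exactly when $\hat{\beta}_{1,i}(\lambda|\vS,\tilde{\theta}_i^{[s]})>\hat{\alpha}_i(\lambda|\vS,\tilde{\theta}_i^{[s]})$. Writing $t=\tr(\vS\vG_\lambda\vS^{-1}\vG_\lambda)>0$, this condition reads
\[
\frac{d_i^2}{(d_i+\theta)^2}\,(b_{1,i}-b_{2,i})>t, \qquad \theta=\tilde{\theta}_i^{[s]}.
\]

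The plan is as follows. The hypothesis $\hat{\theta}_i^{[s]}(\lambda|\vS,\tilde{\theta}_i^{[s-1]})=0$ means, by the same rule, that
\[
\frac{d_i^2}{(d_i+\tilde{\theta}_i^{[s-1]})^2}\,(b_{1,i}-b_{2,i})>t .
\]
This needs $s\ge2$, so that step $s$ also uses the $\theta$-dependent formulas above rather than the special $s=1$ formulas. That is exactly why the theorem requires $s\ge 2$.

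Since $t>0$, the hypothesis forces $b_{1,i}-b_{2,i}>0$. The previous parameter for the next step is $\tilde{\theta}_i^{[s]}=\hat{\theta}_i^{[s]}=0$. Then $d_i^2/(d_i+0)^2=1\ge d_i^2/(d_i+\tilde{\theta}_i^{[s-1]})^2$, because $\tilde{\theta}_i^{[s-1]}\ge0$ and $d_i>0$. Multiplying by the positive quantity $b_{1,i}-b_{2,i}$ gives $b_{1,i}-b_{2,i}>t$, which is the condition for $\hat{\theta}_i^{[s+1]}=0$.

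The only delicate point is the case $\tilde{\theta}_i^{[s-1]}=\infty$. There the left side of the hypothesis is $0$, so the hypothesis cannot hold, and the case is vacuous.

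I also need to confirm that the convention $\tilde{\theta}_i^{[s]}=\hat{\theta}_i^{[s]}$ is what the repeat procedure uses. I expect the main obstacle to be bookkeeping: checking that step $s$ with $s\ge2$ indeed uses the $(d_i+\theta)^{-2}$ scaling with the same $b_{1,i}$ and $b_{2,i}$. Once that is checked, the monotonicity argument is immediate.
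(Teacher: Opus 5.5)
Your proposal is correct and follows the paper's argument. Since $\tr(\vS\vG_\lambda\vS^{-1}\vG_\lambda)>0$, the hypothesis forces $b_{1,i}-b_{2,i}>0$; replacing the factor $d_i^2/(d_i+\tilde{\theta}_i^{[s-1]})^2\le 1$ by $d_i^2/(d_i+0)^2=1$ then gives the zero condition at step $s+1$.

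One small correction: by \eref{Eq24} the value $0$ also arises in the tie case $\hat{\beta}_{1,i}=\hat{\alpha}_i$, through the second branch. So the characterization should read $\hat{\beta}_{1,i}\ge\hat{\alpha}_i$ rather than a strict inequality. Your chain of inequalities goes through unchanged with $\ge$ and still yields $\hat{\theta}_i^{[s+1]}=0$.
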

\begin{proof}
When $\hat{\theta}_i^{[s]}(\lambda|\vS,\tilde{\theta}_i^{[s-1]})=0$, the condition $\hat{\beta}_{1,i}(\lambda|\vS,\tilde{\theta}_i^{[s-1]})>\hat{\alpha}_{1,i}(\lambda|\vS,\tilde{\theta}_i^{[s-1]})$ is satisfied.
In $s\ge 2$, we can prove this theorem as some elementary calculation as following ways.
This inequation means that $d_i^2\ve_i'\{\vP\vY\vG_\lambda^2\vS^{-1}\vG_\lambda(\vI_p-\vG_\lambda)\}\vY'\vP'\ve_i/\{d_i+\hat{\theta}_i^{[s-1]}(\lambda|\vS,\tilde{\theta}_i^{[s-2]})\}^2 \ge \tr(\vS\vG_\lambda\vS^{-1}\vG_\lambda)$.
Since $\hat{\theta}_i^{[s-1]}(\lambda|\vS,\tilde{\theta}_i^{[s-2]}) \ge 0$ and $\tr(\vS\vG_\lambda\vS^{-1}\vG_\lambda)>0$, we obtain $\ve_i'\{\vP\vY\vG_\lambda^2\vS^{-1}\vG_\lambda(\vI_p-\vG_\lambda)\}\vY'\vP'\ve_i\ge d_i\ve_i'\{\vP\vY\vG_\lambda^2\vS^{-1}\vG_\lambda(\vI_p-\vG_\lambda)\}\vY'\vP'\ve_i/\{d_i+\hat{\theta}_i^{[s-1]}(\lambda|\vS,\tilde{\theta}_i^{[s-2]})\}^2$.
From these inequations, we can see that $\ve_i'\{\vP\vY\vG_\lambda^2\vS^{-1}\vG_\lambda(\vI_p-\vG_\lambda)\}\vY'\vP'\ve_i>\tr(\vS\vG_\lambda\vS^{-1}\vG_\lambda)$ is satisfied.
Thus, $\hat{\theta}_i^{[s+1]}(\lambda|\vS,\tilde{\theta}_i^{[s]})=0$ is derived for $s\ge 2$.
\end{proof}
This means the following parameter is derived as $0$ after the parameter becomes $0$ in some number of repeats.
In addition, when we obtain the estimated parameter as $0$, we do not need any repetition.
Notice that we do not say $\hat{\theta}_i^{[1]}(\lambda|\vS,\tilde{\theta}_i^{[0]})$ is always becoming $0$.

Secondary, when $d_i$ is too small, there are no information to estimate $\vGamma$ or $\vXi$.
Thus, we wanna set $\hat{\theta}_i^{[s+1]}(\lambda|\vS,\tilde{\theta}_i^{[s]})$ as $\infty$.
We achieve this by the following theorem;
\begin{theorem}
$\hat{\theta}_i^{[s+1]}(\lambda|\vS,\tilde{\theta}_i^{[s]})=\infty$ when $\hat{\theta}_i^{[s]}(\lambda|\vS,\tilde{\theta}_i^{[s-1]})=\infty$ for some natural number $s$.
\label{t2}
\end{theorem}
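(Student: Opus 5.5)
The plan is to read off what the plug-in estimator for $\vgamma_i$ becomes once the previous step has returned $\infty$, and to show that the resulting $\hat{\beta}_{1,i}$ cannot be positive. Then the third case of \eref{PI} applies at step $s+1$.

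First, suppose $\hat{\theta}_i^{[s]}(\lambda|\vS,\tilde{\theta}_i^{[s-1]})=\infty$. Then $\tilde{\theta}_i^{[s]}=\infty$ is the value fed into step $s+1$. By the expression
$$\hat{\vgamma}_i(\theta_i,\lambda)=\sqrt{d_i}(\vX'\vX+\lambda\vK)^{-1}\vX'\vY'\vP'\ve_i/(d_i+\theta_i),$$
the plugged-in vector is $\hat{\vgamma}_i(\infty,\lambda)=\0_q$. I will make this precise as the limit $\theta_i\to\infty$, which is the natural convention: taking $\theta_i=\infty$ in $\hat{\vXi}_{\vtheta,\lambda}$ means the $i$th component is shrunk to zero.

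Second, since $s\ge 1$, step $s+1\ge 2$ uses the ``$s\ge 2$'' formulas
$$\hat{\beta}_{1,i}(\lambda|\vS,\theta_i)=\frac{d_i^2\,\ve_i'\vP\vY\vG_\lambda^2\vS^{-1}\vG_\lambda\vY'\vP'\ve_i}{(d_i+\theta_i)^2},\qquad \hat{\beta}_{2,i}(\lambda|\vS,\theta_i)=\frac{d_i^2\,\ve_i'\vP\vY\vG_\lambda^2\vS^{-1}\vG_\lambda^2\vY'\vP'\ve_i}{(d_i+\theta_i)^2}.$$
Each numerator is a fixed finite quadratic form that does not depend on $\theta_i$. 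With $\theta_i=\tilde{\theta}_i^{[s]}=\infty$ (equivalently, $\hat{\vgamma}_i=\0_q$ in the definitions of $\beta_{1,i}$ and $\beta_{2,i}$), we get $\hat{\beta}_{1,i}(\lambda|\vS,\infty)=0$ and $\hat{\beta}_{2,i}(\lambda|\vS,\infty)=0$. Consequently $\hat{\alpha}_i(\lambda|\vS,\infty)=\tr(\vS\vG_\lambda\vS^{-1}\vG_\lambda)>0$.

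Third, I plug these values into \eref{PI}. The condition $0\ge\hat{\beta}_{1,i}(\lambda|\vS,\tilde{\theta}_i^{[s]})$ holds with equality. The other two branches require $\hat{\beta}_{1,i}>0$ and are therefore excluded. Hence $\hat{\theta}_i^{[s+1]}(\lambda|\vS,\tilde{\theta}_i^{[s]})=\infty$.

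The only delicate point, and the one I expect to be the main obstacle, is justifying the substitution $\theta_i=\infty$. I would handle it by noting that $\hat{\beta}_{1,i}(\lambda|\vS,\theta_i)$ and $\hat{\beta}_{2,i}(\lambda|\vS,\theta_i)$ are continuous in $\theta_i$ and tend to $0$ as $\theta_i\to\infty$, while $\hat{\alpha}_i$ stays bounded away from $0$. Under the convention that $\theta_i=\infty$ means the limiting (zero) estimator of $\vgamma_i$, the case selection in \eref{PI} is then unambiguous.

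As a consistency check, I would also verify the finite-$\theta_i$ sign structure for $s\ge 2$: the sign of $\hat{\beta}_{1,i}$ equals the sign of the numerator, which does not depend on $\theta_i$. So if the previous $\infty$ arose at a step $s\ge 2$ from a nonpositive numerator, that numerator remains nonpositive at step $s+1$. This gives the same conclusion without invoking the limit. The case $s=1$, where the step-$1$ formula involves $\vG_0$ rather than $\vG_\lambda^2$, is covered by the limit argument above.
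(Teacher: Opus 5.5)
Your proposal is correct and follows the paper's proof: once $\tilde{\theta}_i^{[s]}=\infty$, the plugged-in $\hat{\vgamma}_i$ vanishes, so $\hat{\beta}_{1,i}(\lambda|\vS,\tilde{\theta}_i^{[s]})=0$, and the third case of \eref{Eq24} forces $\hat{\theta}_i^{[s+1]}=\infty$. Your extra care in reading $\theta_i=\infty$ as a limit goes beyond the paper's two-line argument but is consistent with it, as is your observation that this convention is genuinely needed when $s=1$ because the step-one numerator involves $\vG_0$.
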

\begin{proof}
When $\hat{\theta}_i^{[s]}(\lambda|\vS,\tilde{\theta}_i^{[s-1]})=\infty$, we obtain $\hat{\beta}_{1,i}(\lambda|\vS,\tilde{\theta}_i^{[s]})=0$.
Then, from the estimation method for \eref{PI}, $\hat{\theta}_i^{[s+1]}(\lambda|\vS,\tilde{\theta}_i^{[s]})=\infty$.
\end{proof}
This theorem means that when this method judged the $i$th eigenvalue is too small for some number of repeats, that is there is no information, we do not use the $i$th eigenvalue's information after then.
Moreover, this theorem also means that we do not need any repetition when we obtain the estimated parameter as $\infty$.

Lastly, if $\hat{\theta}_i^{[s]}(\lambda|\vS,\tilde{\theta}_i^{[s-1]}) \in (0,\infty)$, we consider the relationship between $\hat{\theta}_i^{[s]}(\lambda|\vS,\tilde{\theta}_i^{[s-1]})$ and $\hat{\theta}_i^{[s+1]}(\lambda|\vS,\tilde{\theta}_i^{[s]})$.
Then, we obtain the following theorem;
\begin{theorem}
If $\hat{\theta}_i^{[s]}(\lambda|\vS,\tilde{\theta}_i^{[s-1]}) \in (0,\infty)$ for any natural number $s \ge 2$ and $\hat{\theta}_i^{[2]}(\lambda|\vS,\tilde{\theta}_i^{[1]})>\hat{\theta}_i^{[1]}(\lambda|\vS,\tilde{\theta}_i^{[0]})$, then
$\hat{\theta}_i^{[s+1]}(\lambda|\vS,\tilde{\theta}_i^{[s]}) > \hat{\theta}_i^{[s]}(\lambda|\vS,\tilde{\theta}_i^{[s-1]})$ for all natural number $s$.
\label{t3}
\end{theorem}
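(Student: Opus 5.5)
The idea is to write the repeat plug-in update, for $s\ge 2$, as one fixed scalar map $\theta\mapsto h_i(\theta)$ applied to the previous iterate. We then show $h_i$ is strictly increasing on $[0,\infty)$ and conclude by induction that a single increase propagates forever.

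\textbf{Step 1 (the map $h_i$).} Following the expressions just before Theorem \ref{t1}, put
$c_{1,i}=\ve_i'\vP\vY\vG_\lambda^2\vS^{-1}\vG_\lambda\vY'\vP'\ve_i$, $c_{2,i}=\ve_i'\vP\vY\vG_\lambda^2\vS^{-1}\vG_\lambda^2\vY'\vP'\ve_i$ and $T=\tr(\vS\vG_\lambda\vS^{-1}\vG_\lambda)>0$. None of these depend on $s$ or on $\theta_i$. For a previous parameter $\theta\ge0$ we then have
$\hat{\beta}_{1,i}(\lambda|\vS,\theta)=d_i^2c_{1,i}/(d_i+\theta)^2$, $\hat{\beta}_{2,i}(\lambda|\vS,\theta)=d_i^2c_{2,i}/(d_i+\theta)^2$ and $\hat{\alpha}_i(\lambda|\vS,\theta)=T+\hat{\beta}_{2,i}(\lambda|\vS,\theta)$.

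\textbf{Step 2 (which branch of \eref{PI} applies).} By hypothesis, $\hat{\theta}_i^{[s+1]}(\lambda|\vS,\tilde{\theta}_i^{[s]})\in(0,\infty)$ for every $s\ge1$. So the middle branch of \eref{PI} is always the active one, and $\hat{\beta}_{1,i}>0$ there. This forces $c_{1,i}>0$, which holds uniformly in $s$ since $c_{1,i}$ does not involve $s$. Substituting into the middle branch gives, for every $s\ge1$,
\begin{eqnarray*}
\hat{\theta}_i^{[s+1]}(\lambda|\vS,\tilde{\theta}_i^{[s]})=h_i(\tilde{\theta}_i^{[s]}),\qquad h_i(\theta)=d_i\left\{\frac{T(d_i+\theta)^2}{d_i^2c_{1,i}}+\frac{c_{2,i}}{c_{1,i}}-1\right\}.
\end{eqnarray*}
Here $\tilde{\theta}_i^{[s]}=\hat{\theta}_i^{[s]}(\lambda|\vS,\tilde{\theta}_i^{[s-1]})$. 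The case $s=1$ is included: the step-$1$ estimate $\hat{\theta}_i^{[1]}$ uses the LSE-based formulas, but the step-$2$ estimate already uses the $\theta$-dependent formulas evaluated at $\tilde{\theta}_i^{[1]}$.

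\textbf{Step 3 (monotonicity and induction).} Since $T>0$, $c_{1,i}>0$ and $d_i>0$, the map $h_i$ is a strictly increasing quadratic on $[0,\infty)$. We induct on $s$.
\begin{itemize}
\item Base case $s=1$: the hypothesis $\hat{\theta}_i^{[2]}>\hat{\theta}_i^{[1]}$ is exactly the claim.
\item Induction step: if $\hat{\theta}_i^{[s]}>\hat{\theta}_i^{[s-1]}$, then $\hat{\theta}_i^{[s+1]}=h_i(\hat{\theta}_i^{[s]})>h_i(\hat{\theta}_i^{[s-1]})=\hat{\theta}_i^{[s]}$.
\end{itemize}
This proves the statement for all natural numbers $s$.

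The main point needing care is Step 2. We must confirm that the same map $h_i$, with $s$-independent coefficients, governs every step from $s=1$ onward, despite the different definition at the first step. We must also confirm that the interior-value hypothesis really excludes the $0$ and $\infty$ branches of \eref{PI} and guarantees $c_{1,i}>0$, so that monotonicity of $h_i$ can be used.
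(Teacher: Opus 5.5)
Your proof is correct and is essentially the paper's argument: the paper's key identity $\hat\theta_i^{[m+2]}-\hat\theta_i^{[m+1]}=\{2d_i+\hat\theta_i^{[m+1]}+\hat\theta_i^{[m]}\}\{\hat\theta_i^{[m+1]}-\hat\theta_i^{[m]}\}\tr(\vS\vG_\lambda\vS^{-1}\vG_\lambda)/(d_i c_{1,i})$ is exactly $h_i(\hat\theta_i^{[m+1]})-h_i(\hat\theta_i^{[m]})$ for your increasing quadratic $h_i$, and the induction that follows is the same. The only difference is that the paper also argues at each step that the next iterate remains in $(0,\infty)$, so it effectively needs the interior condition only at $s=2$; you instead take this directly from the hypothesis for all $s\ge 2$, which the statement as written permits.
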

\begin{proof}
When $\hat{\theta}_i^{[s]}(\lambda|\vS,\tilde{\theta}_i^{[s-1]})<\infty$, this assumption means $\ve_i'\vP\vY\vG_\lambda^2\vS^{-1}\vG_\lambda\vY'\vP'\ve_i>0$.
Then, it holds that $\hat{\theta}_i^{[s+1]}(\lambda|\vS,\tilde{\theta}_i^{[s]})<\infty$.

Firstly, when $\hat{\theta}_i^{[2]}(\lambda|\vS,\tilde{\theta}_i^{[1]}) \in (0,\infty)$, $\hat{\alpha}(\lambda|\vS,\tilde{\theta}_i^{[1]})>\hat{\beta}_{i,1}(\lambda|\vS,\tilde{\theta}_i^{[1]})>0$.
Then, since $(d_i/(d_i+\hat{\theta}_i^{[1]}(\lambda|\vS,\tilde{\theta}_i^{[0]})))^2>0$ and $\tr(\vS\vG_\lambda\vS^{-1}\vG_\lambda)>0$, by using $\{1+\hat{\theta}_i^{[1]}(\lambda|\vS,\tilde{\theta}_i^{[0]})/d_i\}^2>1$ and $\hat{\theta}_i^{[2]}(\lambda|\vS,\tilde{\theta}_i^{[1]})>\hat{\theta}_i^{[1]}(\lambda|\vS,\tilde{\theta}_i^{[0]})$, we obtain that $\hat{\theta}_i^{[3]}(\lambda|\vS,\tilde{\theta}_i^{[2]}) \in (0,\infty)$ is also satisfied when $\hat{\theta}_i^{[2]}(\lambda|\vS,\tilde{\theta}_i^{[1]}) \in (0,\infty)$.
By computing $\hat{\theta}_i^{[3]}(\lambda|\vS,\tilde{\theta}_i^{[2]})-\hat{\theta}_i^{[2]}(\lambda|\vS,\tilde{\theta}_i^{[1]})$, we obtain $\hat{\theta}_i^{[3]}(\lambda|\vS,\tilde{\theta}_i^{[2]}) > \hat{\theta}_i^{[2]}(\lambda|\vS,\tilde{\theta}_i^{[1]})$ if $\hat{\theta}_i^{[2]}(\lambda|\vS,\tilde{\theta}_i^{[1]}) > \hat{\theta}_i^{[1]}(\lambda|\vS,\tilde{\theta}_i^{[0]})$ since $\tr(\vS\vG_\lambda\vS^{-1}\vG_\lambda) > 0$, $\hat{\theta}_i^{[2]}(\lambda|\vS,\tilde{\theta}_i^{[1]})>0$, $d_i>0$, and $e_i'\vP\vY\vG_\lambda^2\vS^{-1}\vG_\lambda\vY'\vP'\ve_i>0$ from the condition for $\hat{\theta}_i^{[2]}(\lambda|\vS,\tilde{\theta}_i^{[1]})\in (0,\infty)$.

Secondary, we assume $\hat{\theta}_i^{[m+1]}(\lambda|\vS,\tilde{\theta}_i^{[m]}) > \hat{\theta}_i^{[m]}(\lambda|\vS,\tilde{\theta}_i^{[m-1]})$ and $\hat{\theta}_i^{[m]}(\lambda|\vS,\tilde{\theta}_i^{[m-1]}) \in (0,\infty)$ for some natural number $m$.
Since we assume $\hat{\theta}_i^{[m]}(\lambda|\vS,\tilde{\theta}_i^{[m-1]}) \in (0,\infty)$, the following inequation is holding;
\begin{align*}
\left(\!1+\dfrac{\hat{\theta}_i^{[m-1]}(\lambda|\vS,\tilde{\theta}_i^{[m-2]})}{d_i}\right)^2\!\tr(\vS\vG_\lambda\vS^{-1}\vG_\lambda)+\ve_i'\vP\vY\vG_\lambda^2\vS^{-1}\vG_\lambda^2\vY'\vP'\ve_i>\ve_i'\vP\vY\vG_\lambda^2\vS^{-1}\vG_\lambda\vY'\vP'\ve_i>0.
\end{align*}
Using the assumption $\hat{\theta}_i^{[m+1]}(\lambda|\vS,\tilde{\theta}_i^{[m]}) > \hat{\theta}_i^{[m]}(\lambda|\vS,\tilde{\theta}_i^{[m-1]})$, we obtain $\hat{\theta}_i^{[m+2]}(\lambda|\vS,\tilde{\theta}_i^{[m+1]}) \in (0,\infty)$.
Then, we derive 
\begin{align*}
&\hat{\theta}_i^{[m+2]}(\lambda|\vS,\tilde{\theta}_i^{[m+1]})-\hat{\theta}_i^{[m+1]}(\lambda|\vS,\tilde{\theta}_i^{[m]})\\
=&\dfrac{\{2d_i+\hat{\theta}_i^{[m+1]}(\lambda|\vS,\tilde{\theta}_i^{[m]}) + \hat{\theta}_i^{[m]}(\lambda|\vS,\tilde{\theta}_i^{[m-1]})\}\{\hat{\theta}_i^{[m+1]}(\lambda|\vS,\tilde{\theta}_i^{[m]}) - \hat{\theta}_i^{[m]}(\lambda|\vS,\tilde{\theta}_i^{[m-1]})\}\tr(\vS\vG_\lambda\vS^{-1}\vG_\lambda)}{d_i\ve_i'\vP\vY\vG_\lambda^2\vS^{-1}\vG_\lambda\vY'\vP'\ve_i}.
\end{align*}
Then, we obtain $\hat{\theta}_i^{[m+2]}(\lambda|\vS,\tilde{\theta}_i^{[m+1]}) > \hat{\theta}_i^{[m+1]}(\lambda|\vS,\tilde{\theta}_i^{[m]})$ from the assumption $\hat{\theta}_i^{[m+1]}(\lambda|\vS,\tilde{\theta}_i^{[m]}) > \hat{\theta}_i^{[m]}(\lambda|\vS,\tilde{\theta}_i^{[m-1]})$, $\tr(\vS\vG_\lambda\vS^{-1}\vG_\lambda) > 0$, $d_i>0$, $\hat{\theta}_i^{[m+1]}(\lambda|\vS,\tilde{\theta}_i^{[m]})>0$, and $\hat{\theta}_i^{[m]}(\lambda|\vS,\tilde{\theta}_i^{[m-1]})>0$.

Using mathematical induction, we prove this theorem.
\end{proof}
From this proof, and using \tref{t1} and \tref{t2}, we note the following corollary;
\begin{corollary}
$\hat{\theta}_i^{[s+1]}(\lambda|\vS,\tilde{\theta}_i^{[s]}) \in (0,\infty)$ is holding only when $\hat{\theta}_i^{[s]}(\lambda|\vS,\tilde{\theta}_i^{[s-1]}) \in (0,\infty)$ for $s \ge 2$.
\end{corollary}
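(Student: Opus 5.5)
The corollary says that for $s\ge 2$ an interior value at step $s+1$ forces an interior value at step $s$. I would prove it by contraposition: assume $\hat{\theta}_i^{[s]}(\lambda|\vS,\tilde{\theta}_i^{[s-1]})\notin(0,\infty)$ and show that $\hat{\theta}_i^{[s+1]}(\lambda|\vS,\tilde{\theta}_i^{[s]})\notin(0,\infty)$. The rule \eref{PI} gives only three possible values, $0$, an interior value, or $\infty$. So "not in $(0,\infty)$" means either $\hat{\theta}_i^{[s]}=0$ or $\hat{\theta}_i^{[s]}=\infty$, and I would treat these two cases separately, quoting the two earlier theorems.

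In the first case, $\hat{\theta}_i^{[s]}(\lambda|\vS,\tilde{\theta}_i^{[s-1]})=0$ with $s\ge 2$. These are exactly the hypotheses of \tref{t1}, so $\hat{\theta}_i^{[s+1]}(\lambda|\vS,\tilde{\theta}_i^{[s]})=0\notin(0,\infty)$. The restriction $s\ge 2$ in the corollary is precisely what lets us use \tref{t1}, which is stated only for $s\ge 2$. This explains why the corollary excludes $s=1$: the $s=1$ estimators use $\hat{\vGamma}=\hat{\vGamma}_{\0_k,0}$ and have a different form.

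In the second case, $\hat{\theta}_i^{[s]}(\lambda|\vS,\tilde{\theta}_i^{[s-1]})=\infty$. Then \tref{t2}, which holds for any natural number $s$, gives $\hat{\theta}_i^{[s+1]}(\lambda|\vS,\tilde{\theta}_i^{[s]})=\infty\notin(0,\infty)$. Combining the two cases gives the contrapositive, and hence the corollary.

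The main thing to check is that the trichotomy in \eref{PI} really is exhaustive for the plug-in estimators. The conditions there are $\hat\beta_{1,i}>\hat\alpha_i$, $\hat\alpha_i\ge\hat\beta_{1,i}>0$, and $\hat\beta_{1,i}\le 0$; they partition the real line, so every iterate is $0$, interior, or $\infty$. A second point needs care: in the middle branch, the value is exactly $0$ when $\hat\alpha_i=\hat\beta_{1,i}$. That boundary case must be counted as "$=0$" when applying \tref{t1}. I would read the proof of \tref{t1} as deriving only the inequality $\hat{\beta}_{1,i}\ge\hat{\alpha}_i$ from $\hat{\theta}_i^{[s]}=0$; in fact its displayed inequality is written with $\ge$. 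Checking this is the step I expect to need the most care. Everything else is direct citation of \tref{t1} and \tref{t2}.
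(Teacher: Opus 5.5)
Your proposal is correct and follows the paper's route. The paper obtains the corollary directly from \tref{t1} and \tref{t2}, i.e.\ the values $0$ (for $s\ge 2$) and $\infty$ are absorbing, which is exactly your contrapositive split of $\hat{\theta}_i^{[s]}(\lambda|\vS,\tilde{\theta}_i^{[s-1]})\notin(0,\infty)$ into the cases $0$ and $\infty$; and the boundary case you flag is harmless, because with $\ge$ the argument of \tref{t1} gives $\ve_i'\vP\vY\vG_\lambda^2\vS^{-1}\vG_\lambda(\vI_p-\vG_\lambda)\vY'\vP'\ve_i\ge\tr(\vS\vG_\lambda\vS^{-1}\vG_\lambda)>0$ at step $s+1$, and if equality holds then $\hat{\beta}_{1,i}=\hat{\alpha}_i>0$, so the middle branch of \eref{PI} returns exactly $0$.
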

When $\hat{\theta}_i^{[1]}(\lambda|\vS,\tilde{\theta}_i^{[0]}) \in (0,\infty)$, we can see that $\hat{\theta}_i^{[1]}(\lambda|\vS,\tilde{\theta}_i^{[0]})\ge \hat{\theta}_i^{[0]}$ since $\hat{\theta}_i^{[0]}=\tilde{\theta}_i^{[0]}=0$.
That is to say, $\hat{\theta}_i^{[s]}(\lambda|\vS,\tilde{\theta}_i^{[s-1]})$ is a monotone increase sequence with $s$ when $\hat{\theta}_i^{[1]}(\lambda|\vS,\tilde{\theta}_i^{[0]}) \in (0,\infty)$ and $\hat{\theta}_i^{[2]}(\lambda|\vS,\tilde{\theta}_i^{[1]})>\hat{\theta}_i^{[1]}(\lambda|\vS,\tilde{\theta}_i^{[0]})$.
Hence, when we use large $s$, more shrinked estimator for $\vXi$ or $\vGamma$ is derived.
%
Further, the corresponding theorem when $\lambda\vK=\0_q\0_q'$ and $\vX=\vI_p$ is showed in Nagai, Yanagihara and Satoh (2012).

Similar to Nagai, Yanagihar and Satoh (2012), we can consider a convergence of $\hat{\theta}_i^{[s]}(\lambda|\vS,\tilde{\theta}_i^{[s-1]}) \in (0,\infty)$ when $s \to \infty$.
Then, we prove the following theorem when $\hat{\theta}_i^{[2]}(\lambda|\vS,\tilde{\theta}_i^{[1]})>\hat{\theta}_i^{[1]}(\lambda|\vS,\tilde{\theta}_i^{[0]})$;
\begin{theorem}
Letting $\nu_i=d_i/\{(d_i+\hat{\theta}_i^{[s]}(\lambda|\vS,\tilde{\theta}_i^{[s-1]})\}$, $L_i=\nu_i^2\tr(\vS\vG_\lambda\vS^{-1}\vG_\lambda)/\hat{\beta}_{1,i}(\lambda|\vS,\tilde{\theta}_i^{[s-1]})$ and $M_i=1-\hat{\beta}_{2,i}(\lambda|\vS,\tilde{\theta}_i^{[s-1]})/\hat{\beta}_{1,i}(\lambda|\vS,\tilde{\theta}_i^{[s-1]})$, if $\hat{\theta}_i^{[2]}(\lambda|\vS,\tilde{\theta}_i^{[1]})>\hat{\theta}_i^{[1]}(\lambda|\vS,\tilde{\theta}_i^{[0]})$, then we obtain $\hat{\theta}_i^{[\infty]}(\lambda|\vS)\stackrel{\rm def.}{=}\lim_{s\to \infty}\hat{\theta}_i^{[s]}(\lambda|\vS,\tilde{\theta}_i^{[s-1]}) \in (0,\infty)$ $(i=1,\ldots,k)$ as follwos;
$$\hat{\theta}_i^{[\infty]}(\lambda|\vS)=\left\{\begin{tabular}{ll} $\dfrac{d_i\{1-2L_i-\sqrt{1-4L_i(1-M_i)}\}}{2L_i}$ & {\rm ($4L_i(1-M_i) \le 1$)} \\ $\infty$ & {\rm ($4L_i(1-M_i) > 1$)}  \end{tabular}\right..$$
\label{t4}
\end{theorem}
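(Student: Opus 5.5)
The plan is to reduce the repeat plug-in procedure, for fixed $\lambda$ and $i$, to the iteration of one explicit scalar map, and then combine the monotonicity of \tref{t3} with a fixed-point computation. Write $T=\tr(\vS\vG_\lambda\vS^{-1}\vG_\lambda)>0$, $b_1=\ve_i'\vP\vY\vG_\lambda^2\vS^{-1}\vG_\lambda\vY'\vP'\ve_i$ and $b_2=\ve_i'\vP\vY\vG_\lambda^2\vS^{-1}\vG_\lambda^2\vY'\vP'\ve_i$. By the explicit formulas for $s\ge 2$, $\hat{\beta}_{j,i}(\lambda|\vS,\theta)=\nu(\theta)^2 b_j$ with $\nu(\theta)=d_i/(d_i+\theta)$.

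First I would substitute these into \eref{PI}. Whenever the update lies in $(0,\infty)$, this gives
$$\hat{\theta}_i^{[s+1]}=g(\hat{\theta}_i^{[s]}),\qquad g(x)=d_i\Bigl\{L_i\bigl(1+x/d_i\bigr)^2-M_i\Bigr\}.$$
The key observation is that the factor $\nu_i^2$ cancels in both ratios defining the constants. Thus $L_i=T/b_1$ and $M_i=1-b_2/b_1$ do not depend on $s$, which is what makes the statement meaningful. Here $b_1>0$ holds by the first paragraph of the proof of \tref{t3}.

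Second, by \tref{t3} (together with the Corollary, which keeps all iterates in $(0,\infty)$), the sequence $\hat{\theta}_i^{[s]}$ is strictly increasing. It therefore converges in $(0,\infty]$, and the limit is positive because it exceeds $\hat{\theta}_i^{[2]}>0$. If the limit $\theta^\ast$ is finite, continuity of $g$ gives $\theta^\ast=g(\theta^\ast)$. Putting $u=1+\theta^\ast/d_i$, this becomes the quadratic
$$L_iu^2-u+(1-M_i)=0,\qquad u_\pm=\frac{1\pm\sqrt{1-4L_i(1-M_i)}}{2L_i}.$$
Hence no finite fixed point exists when $4L_i(1-M_i)>1$, and the limit must then be $\infty$. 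When $4L_i(1-M_i)\le1$, the candidate $u_-$ gives $\theta=d_i(u_--1)$, which is exactly the displayed formula $d_i\{1-2L_i-\sqrt{1-4L_i(1-M_i)}\}/(2L_i)$.

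Third, I have to justify that the limit is the smaller root $u_-$ rather than $u_+$ or $\infty$. Since $g$ is an increasing convex parabola in $u\ge1$, it satisfies $g(x)>x$ exactly when $u<u_-$ or $u>u_+$. The hypothesis $\hat{\theta}_i^{[2]}>\hat{\theta}_i^{[1]}$ only tells us that the starting point lies in one of these two regions. If it lies below $u_-$, then monotonicity of $g$ gives $\hat{\theta}_i^{[s]}<d_i(u_--1)$ for all $s$ by induction, so the increasing sequence converges to that fixed point. I expect the main obstacle to be excluding the region above $u_+$, where the iteration would diverge even with real roots. I would first try to show $u_1=1+\hat{\theta}_i^{[1]}/d_i\le 1/(2L_i)$, the vertex of the parabola. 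For this I would exploit that $\hat{\theta}_i^{[1]}$ is built from the $s=1$ quantities (with $\vG_0$ instead of $\vG_\lambda^2$), and compare them to $b_1,b_2$, possibly via $\vG_\lambda\preceq\vG_0$. If that comparison fails, I would add the location of the starting point explicitly as the condition under which the formula holds. Finally I would check that the numerator $1-2L_i-\sqrt{\cdot}$ is positive in that case, which follows from $u_->1$.
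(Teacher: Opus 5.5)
Your reduction matches the paper's proof. The $\nu_i^2$ factors cancel, so $L_i=T/b_1$ and $M_i=1-b_2/b_1$ are constants and the update is $\eta\mapsto L_i(1+\eta)^2-M_i$. Monotonicity comes from \tref{t3}, and a finite limit must solve $L_iu^2-u+(1-M_i)=0$. For $4L_i(1-M_i)>1$, the lack of a finite fixed point forces divergence; your version of that step is more explicit than the paper's. The gap is exactly the one you flag: nothing excludes $u_1=1+\hat{\theta}_i^{[1]}/d_i>u_+$. The paper does not close it either. It shows that $\{\eta<1/(2L_i)-1\}$ is invariant when $4L_i(1-M_i)\le 1$, but never checks that $\eta_i^{[1]}$ starts there. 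Its appeal to $|h'(\eta_{1,i}^*)|<1$ (Galor) is purely local, so it says nothing about an orbit started above $u_+$.

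Moreover, the comparison you hope for fails for $\lambda>0$, and so does the statement as written. The first step uses $c_1=\ve_i'\vP\vY\vG_\lambda\vS^{-1}\vG_0\vY'\vP'\ve_i$ and $c_2=\ve_i'\vP\vY\vG_\lambda\vS^{-1}\vG_\lambda\vY'\vP'\ve_i$, so $u_1=(T+c_2)/c_1$. If $\vS$ does not commute with $\vG_\lambda$, $c_1$ is an indefinite form and can be made small while $c_2,b_1,b_2$ stay moderate. Concretely, take $p=q=2$, $\vX=\vI_2$, $\lambda=1$, $\vK=\diag(1/9,9)$, so $\vG_\lambda=\diag(0.9,0.1)$ and $\vG_0=\vI_2$. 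Let $\vS^{-1}$ have unit diagonal and off-diagonal $0.8$, and let $\vY'\vP'\ve_i=(4,-5.2)'$; this is admissible because $\vS$ only involves rows $k+2,\ldots,n$ of $\vP\vY$. Then $T\approx1.958$, $c_1=0.464$, $c_2\approx10.24$, $b_1\approx10.19$, $b_2\approx10.23$. Hence $4L_i(1-M_i)\approx0.77$, $u_-\approx1.36$ and $u_+\approx3.85$, but $u_1\approx26.3$. So $\hat{\theta}_i^{[2]}\approx132.6\,d_i>\hat{\theta}_i^{[1]}\approx25.3\,d_i$ and every iterate lies in $(0,\infty)$. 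Yet the sequence diverges instead of tending to $d_i(u_--1)\approx0.36\,d_i$.

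Your fallback is therefore the right repair: assume $1+\hat{\theta}_i^{[1]}/d_i\le 1/(2L_i)$, which under the increase hypothesis is equivalent to $u_1<u_-$. This holds automatically when $\lambda=0$, since then $\hat{\theta}_i^{[1]}=g(0)$ and $u_->1$. It also holds when $\vS$ commutes with $\vG_\lambda$. In a common eigenbasis, $c_1,c_2,b_1,b_2$ are moments $m_1,\ldots,m_4$ of a positive measure. Real roots plus $m_3^2\le m_2m_4$ force $m_2\ge 4T$, whence $2T(T+m_2)\le\frac{5}{8}m_2^2<m_2^2\le m_1m_3$, i.e.\ $u_1<1/(2L_i)$.

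A minor point: the Corollary only says that membership in $(0,\infty)$ at step $s+1$ requires it at step $s$; it does not propagate it forward. What you need follows directly from $b_1>0$ and monotonicity of $g$ on $[0,\infty)$, since $g(\hat{\theta}_i^{[s+1]})>g(\hat{\theta}_i^{[s]})=\hat{\theta}_i^{[s+1]}>0$.
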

\begin{proof}
Using $L_i$ and $M_i$, we can rewrite $\hat{\theta}_i^{[s+1]}(\lambda|\vS,\tilde{\theta}_i^{[s]})$ as $\eta_i^{[s+1]}=(1+\eta_i^{[s]})^2L_i-M_i$ where $\eta_i^{[s]}=\nu_i^{-1}-1=\hat{\theta}_i^{[s]}(\lambda|\vS,\tilde{\theta}_i^{[s-1]})/d_i$ for natural number $s$ and $\eta_i^{[0]}=0$.
From \tref{t3}, $\eta_i^{[s]}$ is also monotone increasing sequence in this situation.
If $\eta_i^{[s]}<1/(2L_i)-1$, we note that $\eta_i^{[s+1]}<1/(2L_i)-1$ when $4L_i(1-M_i)\le 1$.
Thus, when $4L_i(1-M_i)\le 1$, the sequence of $\eta_i^{[s]}$ is convergence in $s \to \infty$.
In this case, the sequence of $\hat{\theta}_i^{[s]}(\lambda|\vS,\tilde{\theta}_i^{[s-1]})$ is also convergence.
There is exist the converged value $\eta_i^*$ which satisfies $\eta_i^*=(1+\eta_i^*)^2L_i-M_i$.
When $4L_i(1-M_i)\le 1$, we obtain $\eta_i^*=(1\pm\sqrt{1-4L_i(1-M_i)})/(2L_i)-1$.
Let $\eta_{1,i}^*=(1-\sqrt{1-4L_i(1-M_i)})/(2L_i)-1$ and $\eta_{2,i}^*=(1+\sqrt{1-4L_i(1-M_i)})/(2L_i)-1$.
We note that $\eta_{1,i}$ and $\eta_{2,i}$ are positive constant.
Letting $h(x)=(1+x)^2L_i-M_i$, we obtain $h'(x)=d h(x)/(dx)=2(1+x)L_i$.
We obtain $|h'(\eta_{1,i})|<1$ and $|h'(\eta_{2,i})|>1$.
Thus, the stable convergence value is $\eta_{1,i}$ from the Proposition 1.9 in Galor (2007).
On the other hand, when $4L_i(1-M_i)> 1$, $\eta_i^{[s]} \to \infty$ ($s\to \infty$), and then $\hat{\theta}_i^{[\infty]}(\lambda|\vS)=\infty$, since these sequences are monotone increasing.
Hence, we prove this convergence.
\end{proof}
This theorem means that when the condition $4L_i(1-M_i)>1$ is satisfied, we remove the effect of the $i$th eigenvalue $d_i$ if we use this converged parameter.
Furthermore, we can find that this covergence is coinsides with that of Nagai, Yanagihara and Satoh (2012) when $\lambda\vK=\0_q\0_q'$ and $\vX=\vI_p$. 

Moreover, from Theorems 2.3 and 2.4, and Corollary 2.4, we have the following corollary;
\begin{corollary}
If $\hat{\theta}_i^{[1]}(\lambda|\vS,\tilde{\theta}_i^{[0]}) \le \hat{\theta}_i^{[2]}(\lambda|\vS,\tilde{\theta}_i^{[1]})$ and $\hat{\theta}_i^{[s]}(\lambda|\vS,\tilde{\theta}_i^{[s-1]}) <\infty $ for any natural number $s \ge 2$, then $0 \le \hat{\theta}_i^{[1]}(\lambda|\vS,\tilde{\theta}_i^{[0]}) \le \hat{\theta}_i^{[2]}(\lambda|\vS,\tilde{\theta}_i^{[1]}) \le \cdots \le \hat{\theta}_i^{[s]}(\lambda|\vS,\tilde{\theta}_i^{[s-1]}) \le \hat{\theta}_i^{[s+1]}(\lambda|\vS,\tilde{\theta}_i^{[s]}) \le \cdots \le \hat{\theta}_i^{[\infty]}(\lambda|\vS)$ for each $i=1,\ldots,k$.
\end{corollary}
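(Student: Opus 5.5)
The corollary is a chaining of the monotonicity result (\tref{t3}) with the limit result (\tref{t4}). I would organize it as follows.

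\textbf{Base of the chain.} First I dispose of the start. Since $\tilde{\theta}_i^{[0]}=0$ and every $\hat{\theta}_i^{[1]}$ produced by \eref{PI} lies in $[0,\infty]$, we get $0\le\hat{\theta}_i^{[1]}(\lambda|\vS,\tilde{\theta}_i^{[0]})$ directly. The inequality $\hat{\theta}_i^{[1]}\le\hat{\theta}_i^{[2]}$ is a hypothesis. By assumption $\hat{\theta}_i^{[s]}<\infty$ for all $s\ge 2$, so only two cases remain for $\hat{\theta}_i^{[2]}$.

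\textbf{Case $\hat{\theta}_i^{[2]}=0$.} The hypothesis then forces $\hat{\theta}_i^{[1]}=0$. \tref{t1} gives $\hat{\theta}_i^{[s]}=0$ for every $s\ge 2$, so the sequence is constantly $0$ and its limit $\hat{\theta}_i^{[\infty]}$ is $0$. Every inequality in the chain is then an equality.

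\textbf{Case $\hat{\theta}_i^{[2]}\in(0,\infty)$, strict inequality.} Suppose also $\hat{\theta}_i^{[2]}>\hat{\theta}_i^{[1]}$. The preceding corollary, together with \tref{t1} and \tref{t2} and finiteness, shows that all later iterates stay in $(0,\infty)$. \tref{t3} then yields strict increase $\hat{\theta}_i^{[s+1]}>\hat{\theta}_i^{[s]}$ for all $s$. A monotone sequence has a limit in $(0,\infty]$, and \tref{t4} identifies it as $\hat{\theta}_i^{[\infty]}(\lambda|\vS)$. Monotonicity gives $\hat{\theta}_i^{[s]}\le\hat{\theta}_i^{[\infty]}$ for each $s$, which completes the chain.

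\textbf{Case $\hat{\theta}_i^{[2]}\in(0,\infty)$, equality.} This is the main obstacle, since \tref{t3} assumes strict inequality. The recursion $\eta_i^{[s+1]}=(1+\eta_i^{[s]})^2L_i-M_i$ from the proof of \tref{t4} makes $\hat{\theta}_i^{[s+1]}$ a function of $\hat{\theta}_i^{[s]}$ alone (after the first step). My plan is to use the difference identity in the proof of \tref{t3}: it is proportional to $\hat{\theta}_i^{[m+1]}-\hat{\theta}_i^{[m]}$, so equality propagates. Hence $\hat{\theta}_i^{[3]}=\hat{\theta}_i^{[2]}$, and inductively the sequence is constant from $s=1$ onward. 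The limit equals that constant and the chain holds with equalities.

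A subtle point in this case is the first step, where the $s=1$ formulas for $\hat\beta$ differ. I would handle it by applying the identity only from $m=2$ on, using $\hat{\theta}_i^{[1]}=\hat{\theta}_i^{[2]}$ only to seed the induction.
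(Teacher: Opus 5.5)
Your argument is correct in substance and follows the paper's route. The paper justifies this corollary only by pointing to \tref{t3}, \tref{t4} and the preceding corollary. Your extra cases are genuinely needed, since \tref{t3} assumes $\hat\theta_i^{[2]}>\hat\theta_i^{[1]}$ strictly while the corollary only assumes $\le$. The case $\hat\theta_i^{[2]}=0$ (which forces $\hat\theta_i^{[1]}=0$, then \tref{t1} at $s=2$) is fine. Two justifications need repair.

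\textbf{Strict case.} The preceding corollary is the backward implication $\hat\theta_i^{[s+1]}\in(0,\infty)\Rightarrow\hat\theta_i^{[s]}\in(0,\infty)$. \tref{t1} and \tref{t2} only say that $0$ and $\infty$ are absorbing. So none of them rules out a step from a value in $(0,\infty)$ to $0$. Finiteness is your hypothesis, but positivity must come from the increase itself. That is exactly what the induction inside the proof of \tref{t3} supplies: it derives $\hat\theta_i^{[m+2]}\in(0,\infty)$ from $\hat\theta_i^{[m+1]}>\hat\theta_i^{[m]}$. Cite that induction, rather than using \tref{t3} as a black box whose hypothesis you have not verified.

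\textbf{Equality case.} The first-step difficulty is misplaced. For every $s\ge 1$, $\hat\theta_i^{[s+1]}$ is computed from $\hat\theta_i^{[s]}$ by the same $\lambda$-formulas; only $\hat\theta_i^{[1]}$ itself uses the $s=1$ estimators built from $\hat{\vGamma}$. Hence the difference identity from the proof of \tref{t3} already holds for $m=1$, which is the paper's own base step. That instance, $\hat\theta_i^{[3]}-\hat\theta_i^{[2]}\propto\hat\theta_i^{[2]}-\hat\theta_i^{[1]}=0$, is precisely what turns your seed into $\hat\theta_i^{[3]}=\hat\theta_i^{[2]}$. Starting the identity at $m=2$ would leave that step unproved.

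Both issues disappear if you work with the update map directly. Since $\hat\theta_i^{[1]}\le\hat\theta_i^{[2]}<\infty$, finiteness of $\hat\theta_i^{[2]}$ forces $c_i=\ve_i'\vP\vY\vG_\lambda^2\vS^{-1}\vG_\lambda\vY'\vP'\ve_i>0$. Then $\hat\theta_i^{[s+1]}=F(\hat\theta_i^{[s]})$ for all $s\ge 1$, where $F(\theta)=\max\{0,\,d_i[(1+\theta/d_i)^2L_i-M_i]\}$. Here $L_i=\tr(\vS\vG_\lambda\vS^{-1}\vG_\lambda)/c_i>0$ and $M_i=1-\ve_i'\vP\vY\vG_\lambda^2\vS^{-1}\vG_\lambda^2\vY'\vP'\ve_i/c_i$; these are the constants of \tref{t4} after the factor $\nu_i^2$ cancels, so they do not depend on $s$. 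As $F$ is nondecreasing on $[0,\infty)$, $\hat\theta_i^{[s]}\le\hat\theta_i^{[s+1]}$ implies $\hat\theta_i^{[s+1]}=F(\hat\theta_i^{[s]})\le F(\hat\theta_i^{[s+1]})=\hat\theta_i^{[s+2]}$. This covers the zero, equality and strict cases in one induction. The final inequality in the chain is then just the existence of the monotone limit in $[0,\infty]$, which is how $\hat\theta_i^{[\infty]}(\lambda|\vS)$ is defined; the closed form in \tref{t4} plays no role.
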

This is similar result for that of Nagai, Yanagihara and Satoh (2012).

From \tref{t3} to here, we often assume $\hat{\theta}_i^{[2]}(\lambda|\vS,\tilde{\theta}_i^{[1]})>\hat{\theta}_i^{[1]}(\lambda|\vS,\tilde{\theta}_i^{[0]})$.
We can see this inequation holds when $\lambda=0$ without any conditions;
\begin{theorem}
When $\lambda=0$, $\hat{\theta}_i^{[2]}(0|\vS,\tilde{\theta}_i^{[0]})>\hat{\theta}_i^{[1]}(0|\vS,\tilde{\theta}_i^{[0]})$ is always holding.
\label{t5}
\end{theorem}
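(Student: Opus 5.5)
The plan is to exploit the fact that at $\lambda=0$ the matrix $\vG_0=\vX(\vX'\vX)^{-1}\vX'$ is the orthogonal projection onto the column space of $\vX$. In particular $\vG_0^2=\vG_0$ and $\vG_0'=\vG_0$. Idempotency makes the plug-in estimators $\hat{\beta}_{1,i}$ and $\hat{\beta}_{2,i}$ coincide. The update formula \eref{PI} then collapses to a simple ratio, and $\hat{\theta}_i^{[2]}$ becomes an explicit multiple of $\hat{\theta}_i^{[1]}$.

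\emph{Step 1 (first step).} Put $t=\tr(\vS\vG_0\vS^{-1}\vG_0)>0$ and $b_i=\ve_i'\vP\vY\vG_0\vS^{-1}\vG_0\vY'\vP'\ve_i\ge 0$, using the positive definiteness of $\vS$. By the $s=1$ expressions in Section 3.2 and $\vG_0^2=\vG_0$, we get $\hat{\beta}_{1,i}(0|\vS,0)=\hat{\beta}_{2,i}(0|\vS,0)=b_i$ and $\hat{\alpha}_i(0|\vS,0)=t+b_i$. Hence $\hat{\alpha}_i-\hat{\beta}_{1,i}=t>0$, so the branch $\hat{\theta}_i=0$ in \eref{PI} never occurs. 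If $b_i>0$, then
\begin{equation*}
\hat{\theta}_i^{[1]}(0|\vS,\tilde{\theta}_i^{[0]})=d_it/b_i\in(0,\infty).
\end{equation*}
The case $b_i=0$ gives $\hat{\theta}_i^{[1]}=\infty$, and then $\hat{\theta}_i^{[2]}=\infty$ by \tref{t2}. This degenerate case has to be excluded, or read as equality at $\infty$. I would state explicitly that the claim concerns the case $b_i>0$; with a continuous error distribution this holds almost surely.

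\emph{Step 2 (second step).} Write $\nu_i=d_i/(d_i+\hat{\theta}_i^{[1]})$, so that $0<\nu_i<1$ because $\hat{\theta}_i^{[1]}>0$ and $d_i>0$. The $s\ge2$ expressions for $\hat{\beta}_{1,i}$ and $\hat{\beta}_{2,i}$ come from substituting $\hat{\vgamma}_i(\hat{\theta}_i^{[1]},0)=\nu_i\hat{\vgamma}_i(0,0)$ in place of $\hat{\vgamma}_i$. Consequently both equal $\nu_i^2$ times their $s=1$ counterparts, and again coincide by idempotency: $\hat{\beta}_{1,i}=\hat{\beta}_{2,i}=\nu_i^2b_i$ and $\hat{\alpha}_i=t+\nu_i^2b_i$. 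The middle branch of \eref{PI} applies, giving
\begin{equation*}
\hat{\theta}_i^{[2]}(0|\vS,\tilde{\theta}_i^{[1]})=\frac{d_it}{\nu_i^2b_i}=\frac{\hat{\theta}_i^{[1]}(0|\vS,\tilde{\theta}_i^{[0]})}{\nu_i^2}.
\end{equation*}
Since $\nu_i^2<1$ and $\hat{\theta}_i^{[1]}>0$, the strict inequality $\hat{\theta}_i^{[2]}>\hat{\theta}_i^{[1]}$ follows immediately.

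\emph{Main obstacle.} The only delicate point is bookkeeping in Step 2. One must check that the paper's $s\ge2$ formulas, written with $\vG_\lambda^2$ and the factor $d_i^2/(d_i+\theta_i)^2$, really reduce to $\nu_i^2$ times the $s=1$ formulas when $\lambda=0$. I would verify this directly from $\hat{\vgamma}_i(\theta_i,\lambda)=\sqrt{d_i}(\vX'\vX+\lambda\vK)^{-1}\vX'\vY'\vP'\ve_i/(d_i+\theta_i)$ rather than trusting the displayed constants. Doing so confirms that the dependence on $\theta_i$ enters only through the squared shrinkage factor, and that idempotency of $\vG_0$ removes any discrepancy between $\vG_0$ and $\vG_0^2$.
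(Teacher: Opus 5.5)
Your proposal is correct and is essentially the paper's argument: idempotency of $\vG_0$ gives $\hat{\beta}_{1,i}=\hat{\beta}_{2,i}$ at both steps, so $\hat{\alpha}_i-\hat{\beta}_{1,i}=\tr(\vS\vG_0\vS^{-1}\vG_0)>0$, and the second step yields $\hat{\theta}_i^{[2]}=(1+\hat{\theta}_i^{[1]}/d_i)^2\hat{\theta}_i^{[1]}>\hat{\theta}_i^{[1]}$. You also explicitly exclude the case $\ve_i'\vP\vY\vG_0\vS^{-1}\vG_0\vY'\vP'\ve_i=0$, where both estimates equal $\infty$; this is a legitimate refinement that the paper's ``always'' silently assumes away, and your expression $d_it/(\nu_i^2b_i)$ correctly keeps the factor $d_i$ that the paper's displayed formula for $\hat{\theta}_i^{[2]}$ drops.
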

\begin{proof}
Since $\hat{\beta}_{1,i}(0|\vS,\tilde{\theta}_i^{[s]})=\hat{\beta}_{2,i}(0|\vS,\tilde{\theta}_i^{[s]})$ from $\vG_0^2=\vG_0$ for any natural number $s$ and $\tr(\vS\vG_0\vS^{-1}\vG_0)>0$, we obtain $\hat{\theta}_i^{[1]}(0|\vS,\tilde{\theta}_i^{[0]}) \in (0,\infty)$ and $\hat{\theta}_i^{[2]}(0|\vS,\tilde{\theta}_i^{[1]}) \in (0,\infty)$.
Further, $\hat{\theta}_i^{[2]}(0|\vS,\tilde{\theta}_i^{[1]})$ is derived by $\{1+\hat{\theta}_i^{[1]}(0|\vS,\tilde{\theta}_i^{[0]}/d_i)\}^2\tr(\vS\vG_0\vS^{-1}\vG_0)/\ve_i'\vP\vY\vG_0\vS^{-1}\vG_0\vY'\vP'\ve_i$.
Here, we obtain $\hat{\theta}_i^{[2]}(0|\vS,\tilde{\theta}_i^{[0]})>\hat{\theta}_i^{[1]}(0|\vS,\tilde{\theta}_i^{[0]})$ for $\lambda=0$ since $\tr(\vS\vG_0\vS^{-1}\vG_0)/\ve_i'\vP\vY\vG_0\vS^{-1}\vG_0\vY'\vP'\ve_i>0$ and $d_i>0$.
\end{proof}
On the other hand, when $\lambda > 0$, there are some conditions for holding $\hat{\theta}_i^{[2]}(\lambda|\vS,\tilde{\theta}_i^{[1]})>\hat{\theta}_i^{[1]}(\lambda|\vS,\tilde{\theta}_i^{[0]})$;
\begin{theorem}
When $\lambda>0$ and $\hat{\theta}_i^{[1]}(\lambda|\vS,\tilde{\theta}_i^{[0]}) \in (0,\infty)$, the sufficient conditions for the inequation $\hat{\theta}_i^{[2]}(\lambda|\vS,\tilde{\theta}_i^{[1]})>\hat{\theta}_i^{[1]}(\lambda|\vS,\tilde{\theta}_i^{[0]})$ are $\ve_i'\vP\vY\vG_\lambda^2\vS^{-1}\vG_\lambda\vY'\vP'\ve_i>0$, $\ve_i'\vP\vY\vG_\lambda\{\vS^{-1}-\vG_\lambda\vS^{-1}\vG_\lambda\}\vJ_\lambda\vY'\vP'\ve_i>0$ and $\tr(\vS\vG_\lambda\vS^{-1}\vG_\lambda)\ve_i'\vP\vY\vG_\lambda\vS^{-1}(\vG_0+\vG_\lambda)\vJ_\lambda\vY'\vP'\ve_i+\ve_i'\vP\vY\vG_\lambda^2\vS^{-1}\vG_\lambda\vY'\vP'\ve_i'\vP\vY\vG_\lambda\vS^{-1}\vJ_\lambda\vY'\vP'\ve_i>\ve_i'\vP\vY\vG_\lambda\vS^{-1}\vG_0\vY'\vP'\ve_i\ve_i'\vP\vY\vG_\lambda^2\vS^{-1}\vG_\lambda\vJ_\lambda\vY'\vP'$, where $\vJ_\lambda=(\vG_0-\vG_\lambda)/\lambda=\vX(\vX'\vX)^{-1}\vK(\vX'\vX+\lambda\vK)^{-1}\vX'$.
\label{t6}
\end{theorem}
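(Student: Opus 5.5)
Write $y_i=\vY'\vP'\ve_i$, $t=\tr(\vS\vG_\lambda\vS^{-1}\vG_\lambda)>0$, and abbreviate the first-step estimators as $b_1=\hat{\beta}_{1,i}(\lambda|\vS,\tilde{\theta}_i^{[0]})=y_i'\vG_\lambda\vS^{-1}\vG_0y_i$ and $b_2=y_i'\vG_\lambda\vS^{-1}\vG_\lambda y_i$. Because $\hat{\theta}_i^{[1]}\in(0,\infty)$, the closed form \eref{PI} gives
$$\hat{\theta}_i^{[1]}=d_i(t+b_2-b_1)/b_1$$
with $b_1>0$ and $t+b_2\ge b_1$. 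Hence
$$1+\hat{\theta}_i^{[1]}/d_i=(t+b_2)/b_1=:r\ge 1.$$
For the second step we use the $s\ge2$ form of the estimators with $\theta_i=\hat{\theta}_i^{[1]}$, that is $d_i^2/(d_i+\hat{\theta}_i^{[1]})^2=r^{-2}$. Setting $c_1=y_i'\vG_\lambda^2\vS^{-1}\vG_\lambda y_i$ and $c_2=y_i'\vG_\lambda^2\vS^{-1}\vG_\lambda^2 y_i$, we get $\hat{\beta}_{1,i}=c_1/r^2$ and $\hat{\beta}_{2,i}=c_2/r^2$. The first sufficient condition, $c_1>0$, excludes the $\infty$ branch. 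If the $0$ branch occurs there is nothing to prove in the strict sense, so I would check that the remaining conditions force $t r^2+c_2\ge c_1$, or else treat that case directly. On the interior branch we then have
$$\hat{\theta}_i^{[2]}=d_i(r^2t+c_2-c_1)/c_1 .$$

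The target inequality $\hat{\theta}_i^{[2]}>\hat{\theta}_i^{[1]}$ becomes
$$(r^2t+c_2-c_1)/c_1>r-1,$$
equivalently $r^2t+c_2>rc_1$. Multiplying by $b_1^2>0$ gives the polynomial inequality
$$t(t+b_2)^2+c_2b_1^2>c_1b_1(t+b_2).\qquad(\ast)$$
Next I would express the $\vG_0$-containing quantities through $\vG_\lambda$ and $\vJ_\lambda$ via $\vG_0=\vG_\lambda+\lambda\vJ_\lambda$. This gives $b_1=b_2+\lambda u$ with $u=y_i'\vG_\lambda\vS^{-1}\vJ_\lambda y_i$. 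The identity $\vJ_\lambda=(\vG_0-\vG_\lambda)/\lambda$ and $\vG_0\vG_\lambda=\vG_\lambda$ (because $\vG_0$ is the projection onto the column space of $\vX$) give $\vG_\lambda\vJ_\lambda=(\vG_\lambda-\vG_\lambda^2)/\lambda$. So $c_1=b_2-\lambda y_i'\vG_\lambda\vS^{-1}\vG_\lambda\vJ_\lambda y_i$, and $c_2$ can be written the same way. This lets me write $(\ast)$ in terms of $t$, $b_2$, $u$, and the three $\vJ_\lambda$-type quadratic forms that appear in the theorem.

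The key step is to split $(\ast)$ into groups of terms matching the stated conditions. Expanding, $t(t+b_2)^2=t^3+2t^2b_2+tb_2^2$ is positive by itself. The remaining comparison $c_2b_1^2+tb_2^2+\dots$ versus $c_1b_1(t+b_2)$ should break into two pieces. The first is the $t$-linear part $t\,(b_2^2\text{-type}-c_1b_1)$, which after substitution becomes $t$ times the form $y_i'\vG_\lambda\vS^{-1}(\vG_0+\vG_\lambda)\vJ_\lambda y_i$ plus the cross term $c_1u$. These are exactly the two left-hand terms of the third condition, compared against $b_1\cdot y_i'\vG_\lambda^2\vS^{-1}\vG_\lambda\vJ_\lambda y_i$. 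The second is the $t$-free part $b_1(c_2b_1-c_1b_2)$. Using $c_1-c_2=y_i'\vG_\lambda^2\vS^{-1}\vG_\lambda(\vI_p-\vG_\lambda)y_i$, this should reduce to a multiple of the second condition's quantity $y_i'\vG_\lambda\{\vS^{-1}-\vG_\lambda\vS^{-1}\vG_\lambda\}\vJ_\lambda y_i$.

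I expect this regrouping to be the main obstacle. It needs careful bookkeeping of which products of quadratic forms appear and with what signs, and possibly a discarding of nonnegative terms such as $t^3$ and $t^2b_2$. The plan is to expand fully in powers of $t$ and of $\lambda$, match each coefficient to one of the three stated conditions, and drop the leftover terms after verifying they are positive. Once each group is shown positive under its condition, summing them gives $(\ast)$, and hence $\hat{\theta}_i^{[2]}>\hat{\theta}_i^{[1]}$.
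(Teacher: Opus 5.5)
Your reduction is sound: given $c_1>0$, the target is equivalent to $r^2t+c_2>rc_1$, i.e.\ to $(\ast)$, which is the inequality the paper also reaches. The gap is the step you flag as the main obstacle, and the grouping you sketch there is wrong.

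Put $v=y_i'\vG_\lambda\vS^{-1}(\vG_0+\vG_\lambda)\vJ_\lambda y_i$ and $w=y_i'\vG_\lambda^2\vS^{-1}\vG_\lambda\vJ_\lambda y_i$. Then $b_1-c_1=\lambda v$, $b_1-b_2=\lambda u$ and $c_1-c_2=\lambda w$, and the third condition reads $tv+c_1u>b_1w$.

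The $t$-free part of $(\ast)$ is $b_1(c_2b_1-c_1b_2)=\lambda b_1(c_1u-b_1w)$. This is not a multiple of the second condition's quantity $u-w$; it is exactly the $t$-free half of the third condition.

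The $t$-coefficient of $(\ast)$, $b_2^2-c_1b_1$, differs from $\lambda b_1v=b_1^2-c_1b_1$ by $b_2^2-b_1^2=-\lambda u(b_1+b_2)$, which is negative whenever $u>0$. That deficit is absorbed by the terms $t^3+2t^2b_2$ you propose to discard, via $t(t+b_2)^2>tb_1^2$. This inequality is precisely the hypothesis $\hat{\theta}_i^{[1]}>0$ (i.e.\ $t+b_2>b_1$, $r>1$), which your plan never uses.

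The correct split is
\[
t(t+b_2)^2+c_2b_1^2-c_1b_1(t+b_2)=t\,b_1^2(r^2-1)+\lambda b_1\,(tv+c_1u-b_1w).
\]
The first term is positive because $r>1$, and the second is positive by the third condition. This is the paper's argument: it bounds $c_1b_1(\hat{\theta}_i^{[2]}-\hat{\theta}_i^{[1]})/d_i=b_1r^2t+b_1c_2-c_1(t+b_2)$ from below by replacing $r^2$ with $1$, which leaves $\lambda(tv+c_1u-b_1w)$.

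You also misplace the second condition. In the paper it serves only to exclude the $0$ branch for $\hat{\theta}_i^{[2]}$, through $r^2t+c_2-c_1>t+c_2-c_1>(b_1-b_2)+(c_2-c_1)=\lambda(u-w)$, which again uses $t>b_1-b_2$. That case must genuinely be ruled out: on the $0$ branch the claim is false, not vacuous, since $\hat{\theta}_i^{[1]}>0$. But it comes for free. Once $(\ast)$ holds, $r^2t+c_2>rc_1\ge c_1$, so $\hat{\theta}_i^{[2]}$ lies on the interior branch. The first and third conditions, together with the hypotheses, therefore already suffice.
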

\begin{proof}
The condition $\ve_i'\vP\vY\vG_\lambda^2\vS^{-1}\vG_\lambda\vY'\vP'\ve_i>0$ means $\hat{\theta}_i^{[2]}(\lambda|\vS,\tilde{\theta}_i^{[1]})<\infty$.
Then, we firstly consider obtaining the condition $\hat{\theta}_i^{[2]}(\lambda|\vS,\tilde{\theta}_i^{[1]})>0$.

Firstly, we derive the condition for $\hat{\theta}_i^{[2]}(\lambda|\vS,\tilde{\theta}_i^{[1]})>0$ when $\hat{\theta}_i^{[1]}(\lambda|\vS,\tilde{\theta}_i^{[0]}) \in (0.\infty)$, since $\tilde{\theta}_i^{[0]}$ does not use $\lambda$ and $\vtheta$ although $\tilde{\theta}_i^{[1]}$ uses $\lambda$ and $\vtheta$.
From the condition for $\hat{\theta}_i^{[1]}(\lambda|\vS,\tilde{\theta}_i^{[0]}) \in (0.\infty)$, we note $\tr(\vS\vG_\lambda\vS^{-1}\vG_\lambda)>\ve_i'(\vP\vY\vG_\lambda\vS^{-1}\vG_0\vY'\vP'-\vP\vY\vG_\lambda\vS^{-1}\vG_\lambda\vY'\vP')\ve_i$.
Using this inequation and $\tr(\vS\vG_\lambda\vS^{-1}\vG_\lambda)>0$, we obtain 
\begin{align*}
&\left(1+\dfrac{\hat{\theta}_i^{[1]}(\lambda|\vS,\tilde{\theta}_i^{[0]}}{d_i}\right)^2\tr(\vS\vG_\lambda\vS^{-1}\vG_\lambda)+\ve_i'(\vP\vY\vG_\lambda^2\vS^{-1}\vG_\lambda^2\vY'\vP'-\vP\vY\vG_\lambda^2\vS^{-1}\vG_\lambda\vY'\vP')\ve_i\\
>&\tr(\vS\vG_\lambda\vS^{-1}\vG_\lambda)+\ve_i'(\vP\vY\vG_\lambda^2\vS^{-1}\vG_\lambda^2\vY'\vP'-\vP\vY\vG_\lambda^2\vS^{-1}\vG_\lambda\vY'\vP')\ve_i\\
>&\ve_i'(\vP\vY\vG_\lambda\vS^{-1}\vG_0\vY'\vP'-\vP\vY\vG_\lambda\vS^{-1}\vG_\lambda\vY'\vP')\ve_i+\ve_i'(\vP\vY\vG_\lambda^2\vS^{-1}\vG_\lambda^2\vY'\vP'-\vP\vY\vG_\lambda^2\vS^{-1}\vG_\lambda\vY'\vP')\ve_i\\
=&\lambda\ve_i'\vP\vY\vG_\lambda\{\vS^{-1}-\vG_\lambda\vS^{-1}\vG_\lambda\}\vJ_\lambda\vY'\vP'\ve_i,
\end{align*}
from using $\vG_\lambda\vG_0=\vG_\lambda$ and $\vG_0-\vG_\lambda=\lambda\vJ_\lambda$.
Then, when $\ve_i'\vP\vY\vG_\lambda\{\vS^{-1}-\vG_\lambda\vS^{-1}\vG_\lambda\}\vJ_\lambda\vY'\vP'\ve_i> 0$, we find that $(1+\hat{\theta}_i^{[1]}(\lambda|\vS,\tilde{\theta}_i^{[0]}/d_i)^2\tr(\vS\vG_\lambda\vS^{-1}\vG_\lambda)+\ve_i'(\vP\vY\vG_\lambda^2\vS^{-1}\vG_\lambda^2\vY'\vP'-\vP\vY\vG_\lambda^2\vS^{-1}\vG_\lambda\vY'\vP')\ve_i>0$.
This means that $\hat{\theta}_i^{[2]}(\lambda|\vS,\tilde{\theta}_i^{[1]}) \in (0,\infty)$.
Thus, the sufficient condition for $\hat{\theta}_i^{[2]}(\lambda|\vS,\tilde{\theta}_i^{[1]}) \in (0,\infty)$ is derived as $\ve_i'\vP\vY\vG_\lambda\{\vS^{-1}-\vG_\lambda\vS^{-1}\vG_\lambda\}\vJ_\lambda\vY'\vP'\ve_i> 0$.

Secodary, we derive the sufficien condition for $\hat{\theta}_i^{[2]}(\lambda|\vS,\tilde{\theta}_i^{[1]})>\hat{\theta}_i^{[1]}(\lambda|\vS,\tilde{\theta}_i^{[0]})$.
In order to obtain the condition for $\hat{\theta}_i^{[2]}(\lambda|\vS,\tilde{\theta}_i^{[1]})>\hat{\theta}_i^{[1]}(\lambda|\vS,\tilde{\theta}_i^{[0]})$, we calculate following way;
\begin{align*}
&\ve_i'\vP\vY\vG_\lambda^2\vS^{-1}\vG_\lambda\vY'\vP'\ve_i\ve_i'\vP\vY\vG_\lambda\vS^{-1}\vG_0\vY'\vP'\ve_i\{\hat{\theta}_i^{[2]}(\lambda|\vS,\tilde{\theta}_i^{[1]})-\hat{\theta}_i^{[1]}(\lambda|\vS,\tilde{\theta}_i^{[0]})\}\\
>&\{\tr(\vS\vG_\lambda\vS^{-1}\vG_\lambda)+\ve_i'\vP\vY\vG_\lambda^2\vS^{-1}\vG_\lambda^2\vY'\vP'\ve_i\}\ve_i'\vP\vY\vG_\lambda\vS^{-1}\vG_0\vY'\vP'\ve_i\\
&-\{\tr(\vS\vG_\lambda\vS^{-1}\vG_\lambda)+\ve_i'\vP\vY\vG_\lambda\vS^{-1}\vG_\lambda\vY'\vP'\ve_i\}\ve_i'\vP\vY\vG_\lambda^2\vS^{-1}\vG_\lambda\vY'\vP'\ve_i\\
=&\tr(\vS\vG_\lambda\vS^{-1}\vG_\lambda)\ve_i'\vP\vY\vG_\lambda\vS^{-1}(\vG_0-\vG_\lambda^2)\vY'\vP'\ve_i\\
&+\ve_i'\vP\vY\vG_\lambda^2\vS^{-1}\vG_\lambda^2\vY'\vP'\ve_i'\vP\vY\vG_\lambda\vS^{-1}\vG_0\vY'\vP'\ve_i-\ve_i'\vP\vY\vG_\lambda^2\vS^{-1}\vG_\lambda\vY'\vP'\ve_i\ve_i'\vP\vY\vG_\lambda\vS^{-1}\vG_\lambda\vY'\vP'\ve_i\\
=&\lambda\tr(\vS\vG_\lambda\vS^{-1}\vG_\lambda)\ve_i'\vP\vY\vG_\lambda\vS^{-1}(\vG_0+\vG_\lambda)\vJ_\lambda\vY'\vP'\ve_i\\
&+\ve_i'\vP\vY\vG_\lambda^2\vS^{-1}\vG_\lambda(\vG_\lambda-\vG_0)\vY'\vP'\ve_i'\vP\vY\vG_\lambda\vS^{-1}\vG_0\vY'\vP'\ve_i\\
&+\ve_i'\vP\vY\vG_\lambda^2\vS^{-1}\vG_\lambda\vY'\vP'\ve_i'\vP\vY\vG_\lambda\vS^{-1}(\vG_0-\vG_\lambda)\vY'\vP'\ve_i\\
=&\lambda\tr(\vS\vG_\lambda\vS^{-1}\vG_\lambda)\ve_i'\vP\vY\vG_\lambda\vS^{-1}(\vG_0+\vG_\lambda)\vJ_\lambda\vY'\vP'\ve_i\\
&-\lambda\ve_i'\vP\vY\vG_\lambda^2\vS^{-1}\vG_\lambda\vJ_\lambda\vY'\vP'\ve_i'\vP\vY\vG_\lambda\vS^{-1}\vG_0\vY'\vP'\ve_i+\lambda\ve_i'\vP\vY\vG_\lambda^2\vS^{-1}\vG_\lambda\vY'\vP'\ve_i'\vP\vY\vG_\lambda\vS^{-1}\vJ_\lambda\vY'\vP'\ve_i
\end{align*}
from $\vG_0-\vG_\lambda^2=(\vG_0+\vG_\lambda)(\vG_0-\vG_\lambda)=\lambda(\vG_0+\vG_\lambda)\vJ_\lambda$ since $\vG_0=\vG_0^2$ and $\vG_0\vG_\lambda=\vG_\lambda\vG_0=\vG_\lambda$.
From $\ve_i'\vP\vY\vG_\lambda^2\vS^{-1}\vG_\lambda\vY'\vP'\ve_i\ve_i'\vP\vY\vG_\lambda\vS^{-1}\vG_0\vY'\vP'\ve_i>0$, the sufficient condition for $\hat{\theta}_i^{[2]}(\lambda|\vS,\tilde{\theta}_i^{[1]})>\hat{\theta}_i^{[1]}(\lambda|\vS,\tilde{\theta}_i^{[0]})$ is same as the last expression is positive.
Further, from $\lambda>0$, the sufficient condition is as $\tr(\vS\vG_\lambda\vS^{-1}\vG_\lambda)\ve_i'\vP\vY\vG_\lambda\vS^{-1}(\vG_0+\vG_\lambda)\vJ_\lambda\vY'\vP'\ve_i+\ve_i'\vP\vY\vG_\lambda^2\vS^{-1}\vG_\lambda\vY'\vP'\ve_i'\vP\vY\vG_\lambda\vS^{-1}\vJ_\lambda\vY'\vP'\ve_i>\vP\vY\vG_\lambda\vS^{-1}\vG_0\vY'\vP'\ve_i\ve_i'\vP\vY\vG_\lambda^2\vS^{-1}\vG_\lambda\vJ_\lambda\vY'\vP'\ve_i'$.
\end{proof}
Here, we can regard that $\vG_\lambda\vS^{-1}\vG_\lambda$ is derived by shrinking $\vS^{-1}$.
From this reason, the second condition which for $\hat{\theta}_i^{[2]}(\lambda|\vS,\tilde{\theta}_i^{[1]}) \in (0,\infty)$ can be considered as natural condition.

We can see that some condition about the magnitude between $\hat{\theta}_i^{[2]}(\lambda|\vS,\tilde{\theta}_i^{[1]})$ and $\hat{\theta}_i^{[1]}(\lambda|\vS,\tilde{\theta}_i^{[0]})$ in several theorems and corollary can be rewrittein in the condition in \tref{t5} and \tref{t6}.


\vskip 8pt
\centerline{\bf \large 4. Numerical Studies}
\setcounter{section}{4}
\setcounter{equation}{0}
\vskip 8pt

In the present paper, we propose new optimization method.
By conducting simulations, we compare the proposed optimization method with the ordinary optimization method as $\hat{\vtheta}^{[\rm{\tiny C}]}(\lambda)=\arg\!\min_{\theta_i \ge 0}C_p(\vtheta,\lambda)$, $\hat{\vtheta}^{[\rm{\tiny M}]}(\lambda)=\arg\!\min_{\theta_i \ge 0}MC_p(\vtheta,\lambda)$, and LSEs.
Let $\vR_r=\diag(1,\ldots,r)$ and $\vDelta_r(\rho)$ be an $r \times r$ matrix whose $(i,j)$th element is $\rho^{|i-j|}$.

The explanatory matrix $\vA$ is given by $\vA=\vN_{n,k}\vPsi^{1/2}$ where $\vPsi=\vR_k^{1/2}\vDelta_k(\rho_a)\vR_k^{1/2}$ and $\vN_{u,v}$ is a $u \times v$ matrix whose each row vector is generated from the independent $v$-dimensional normal distribution with mean $\0_v$ and covariance matrix $\vI_v$.
Further, $\vXi$ is given by $\vN_{k,p}$ and $\vX$ is made by using {\it bs} function in {\it R} programing with {\it bs($1$:$p$,df=$q$)} which generates each value of $B$-spline function.

\if01
Moreover, let $\vm_i$ ($i=1,\ldots,12$) be a $p$-dimensional vector as follows;\\
\begin{tabular}{llll}
$\vm_1=\vh(\vt; e^2,e^{-1.5},e)$,&\!\!\!\! $\vm_2=\vh(\vt; e^2,e^{-1.5},e^2)$,&\!\!\!\! $\vm_3=\vh(\vt; e^2,e^{-2},e)$,&\!\!\!\! $\vm_4=\vh(\vt; e^2,e^{-2},e^2)$,\\[0.5mm]
$\vm_5=\vh(\vt; e^2,e^{-2.5},e)$,&\!\!\!\! $\vm_6=\vh(\vt; e^2,e^{-2.5},e^2)$,&\!\!\!\! $\vm_7=\vh(\vt; e^3,e^{-1.5},e)$,&\!\!\!\! $\vm_8=\vh(\vt; e^3,e^{-1.5},e)$,\\[0.5mm]
$\vm_9=\vh(\vt; e^3,e^{-2},e)$,&\!\!\!\! $\vm_{10}=\vh(\vt; e^3,e^{-2},e^2)$,&\!\!\!\! $\vm_{11}=\vh(\vt; e^3,e^{-2.5},e)$,&\!\!\!\! $\vm_{12}=\vh(\vt; e^3,e^{-2.5},e^2)$,
\end{tabular}
\\[1mm]
where $\vt=(1,\ldots,p)'$, and the $i$th element of $\vh(\vt;\alpha,\beta,\tau)$ is $\alpha\{1-\exp(-\beta t_i)\}^\tau$ which is the Richard's growth curve model (Richard, 1959).
Using these $\vm_i$ as $\vM_{k}(\vt)=(\vm_1,\ldots,\vm_k)'$, we make a true longitudinal trend by $\vA\vM(\vt)$.
\fi

We generate the response matrix $\vY$ from $N_{n\times p}(\vA\vXi\vX',\vSigma\otimes\vI_n)$ where $\vSigma=\vR_p^{1/2}\vDelta_p(\rho_y)\vR_p^{1/2}$, then we standardized $\vA$ in each column.
The penalty matrix $\vK$ is organized from $(-1) \times \vK_*$ where $\vK_*$ is set from the second difference matrix.
More details of $\vK_*$ and $B$-spline function are reported in Green and Silverman (1994).

When we use $\vtheta$ and $\lambda$, we search the optimize $\lambda$ in each methods by using ``optimize'' function which is a program in the {\it R} with $C_p$ and $MC_p$ criteria with optimized $\vtheta$ who depends on $\lambda$.

We simulate $10^3$ iteration for each $n$, $p$, $k$, $q$, $\rho_a$, and $\rho_y$.
In each iteration, $\vA$ and $\vXi$ are fixed but $\vY$ is varies, and optimize $\vtheta$ and $\lambda$ when we need. 
For each iteration, based on each optimized $\vtheta$ and $\lambda$, we derive $\tilde{\vY}$ that is derived from each estimation method.
Then, we evaluate the estimators by using the following function;
\begin{eqnarray*}
\tr\left\{\left(\tilde{\vY}-\vA\vXi\vX'\right)\vSigma^{-1}\left(\tilde{\vY}-\vA\vXi\vX'\right)'\right\},
\end{eqnarray*}
and we average it over iteration for each methods.
This average value can regard as the estimator for PMSE without constant values.

\begin{table}[htbp]
{\bf Table 1}; Fixed $n=30$ and changing other setting. The fist lines of table header without LSEs mean the optimized method for $\vtheta$.
The second lines of table header without LSEs mean the optimized criterion for $\lambda$.
In each row, bold font is the best result and italic font is the $2$nd best result.
Each values show rounded to two decimal places.
\begin{center}
\tabcolsep=1.1mm
\begin{tabular}{rrrrr|r|rr|rrrrrr} \hline\hline
\multicolumn{5}{c|}{Setting} & \multicolumn{1}{c|}{LSEs} & \multicolumn{1}{c}{$C_p$} & \multicolumn{1}{c|}{$MC_p$} & \multicolumn{2}{c}{$s=1$} & \multicolumn{2}{c}{$s=5$} & \multicolumn{2}{c}{$s\to\infty$} \\
\multicolumn{1}{c}{$p$} & \multicolumn{1}{c}{$k$} & \multicolumn{1}{c}{$q$} & \multicolumn{1}{c}{$\rho_a$} & \multicolumn{1}{c|}{$\rho_y$} & \multicolumn{1}{c|}{} & \multicolumn{1}{c}{$C_p$} & \multicolumn{1}{c|}{$MC_p$} & \multicolumn{1}{c}{$C_p$} & \multicolumn{1}{c}{$MC_p$} & \multicolumn{1}{c}{$C_p$} & \multicolumn{1}{c}{$MC_p$} & \multicolumn{1}{c}{$C_p$} & \multicolumn{1}{c}{$MC_p$} \\\hline\hline
5 & 5 & 3 & 0.20  & 0.20  & 19.90 & 19.75 & 19.79 & {\bf 19.61} & {\it 19.62} & 19.63 & 19.64 & 19.63 & 19.64 \\
 &  &  &  & 0.80  & 29.63 & {\bf 25.34} & {\it 25.36} & 29.37 & 29.37 & 29.41 & 29.42 & 29.41 & 29.42 \\
 &  &  & 0.90  & 0.20  & 19.41 & 15.54 & 15.24 & 15.92 & 15.92 & 14.88 & 14.89 & {\bf 14.87} & {\it 14.87} \\
 &  &  &  & 0.80  & 28.58 & {\it 19.08} & {\bf 18.77} & 23.03 & 23.03 & 21.22 & 21.22 & 21.05 & 21.05 \\
 &  &  & 0.99  & 0.20  & 19.66 & 12.57 & 11.94 & 13.80 & 13.80 & 11.33 & 11.33 & {\it 11.08} & {\bf 11.07} \\
 &  &  &  & 0.80  & 29.23 & {\it 15.95} & {\bf 15.38} & 20.28 & 20.28 & 16.71 & 16.71 & 16.31 & 16.31 \\\cline{2-14}
 & 10 & 3 & 0.20  & 0.20  & {\bf 36.05} & 36.88 & 37.22 & {\it 36.21} & 36.21 & 36.69 & 36.70 & 37.16 & 37.16 \\
 &  &  &  & 0.80  & 54.22 & {\bf 48.49} & {\it 48.73} & 54.38 & 54.38 & 55.24 & 55.24 & 56.26 & 56.28 \\
 &  &  & 0.90  & 0.20  & {\bf 35.88} & 36.82 & 37.48 & {\it 36.05} & 36.05 & 37.67 & 37.67 & 39.24 & 39.23 \\
 &  &  &  & 0.80  & 53.40 & {\bf 47.34} & {\it 47.84} & 54.82 & 54.82 & 58.39 & 58.39 & 61.57 & 61.57 \\
 &  &  & 0.99  & 0.20  & 35.88 & {\it 28.49} & {\bf 28.00} & 28.65 & 28.65 & 28.52 & 28.52 & 30.68 & 30.68 \\
 &  &  &  & 0.80  & 54.28 & {\it 37.39} & {\bf 37.11} & 44.24 & 44.24 & 44.09 & 44.09 & 46.19 & 46.19 \\\hline
10 & 5 & 3 & 0.20  & 0.20  & 25.00 & 24.87 & 25.07 & {\bf 24.14} & 24.21 & {\it 24.20} & 24.28 & 24.21 & 24.28 \\
 &  &  &  & 0.80  & 33.07 & {\bf 29.28} & {\it 29.48} & 32.14 & 32.19 & 33.08 & 33.13 & 33.50 & 33.55 \\
 &  &  & 0.90  & 0.20  & 24.47 & 19.18 & 18.45 & 19.41 & 19.43 & 17.60 & 17.63 & {\bf 17.42} & {\it 17.43} \\
 &  &  &  & 0.80  & 32.42 & {\it 21.90} & {\bf 21.20} & 25.25 & 25.26 & 23.60 & 23.62 & 24.30 & 24.30 \\
 &  &  & 0.99  & 0.20  & 24.32 & 15.96 & 14.51 & 16.72 & 16.73 & 13.49 & 13.50 & {\bf 13.13} & {\it 13.13} \\
 &  &  &  & 0.80  & 32.61 & 18.39 & {\bf 17.03} & 22.17 & 22.18 & 17.83 & 17.84 & {\it 17.22} & 17.23 \\\cline{3-14}
 &  & 7 & 0.20  & 0.20  & 44.01 & 44.24 & 44.30 & {\bf 43.81} & 43.84 & {\it 43.81} & 43.84 & 43.81 & 43.84 \\
 &  &  &  & 0.80  & 55.08 & {\bf 51.26} & {\it 51.28} & 54.68 & 54.68 & 54.68 & 54.68 & 54.68 & 54.68 \\
 &  &  & 0.90  & 0.20  & 44.02 & 34.36 & 32.13 & 35.38 & 35.40 & 31.37 & 31.38 & {\bf 31.02} & {\it 31.02} \\
 &  &  &  & 0.80  & 55.11 & 40.47 & {\bf 38.63} & 44.72 & 44.72 & 40.58 & 40.58 & {\it 40.33} & 40.34 \\
 &  &  & 0.99  & 0.20  & 44.61 & 32.39 & 30.14 & 33.51 & 33.52 & {\bf 30.12} & {\it 30.14} & 31.33 & 31.32 \\
 &  &  &  & 0.80  & 55.57 & {\it 37.97} & {\bf 36.35} & 42.28 & 42.28 & 39.62 & 39.63 & 41.55 & 41.56 \\\cline{2-14}
 & 10 & 3 & 0.20  & 0.20  & 44.75 & 47.13 & 48.20 & {\bf 44.17} & {\it 44.21} & 46.32 & 46.35 & 48.19 & 48.25 \\
 &  &  &  & 0.80  & 60.30 & {\bf 55.38} & {\it 56.47} & 58.66 & 58.68 & 63.60 & 63.62 & 67.85 & 67.87 \\
 &  &  & 0.90  & 0.20  & 44.27 & 45.94 & 47.63 & {\bf 43.29} & {\it 43.30} & 124.53 & 124.54 & 127.70 & 127.71 \\
 &  &  &  & 0.80  & 59.50 & {\bf 52.18} & {\it 53.77} & 57.15 & 57.15 & 65.47 & 65.47 & 73.84 & 73.85 \\
 &  &  & 0.99  & 0.20  & 44.27 & 33.41 & 31.46 & 32.51 & 32.51 & {\bf 30.58} & {\it 30.58} & 31.63 & 31.63 \\
 &  &  &  & 0.80  & 59.53 & 36.30 & {\bf 33.54} & 41.92 & 41.92 & 35.97 & 35.97 & {\it 35.66} & 35.67 \\\cline{3-14}
 &  & 7 & 0.20  & 0.20  & 81.48 & 83.18 & 83.83 & {\bf 81.40} & {\it 81.43} & 5683.54 & 5683.56 & 5683.54 & 5683.57 \\
 &  &  &  & 0.80  & 101.28 & {\bf 97.50} & {\it 97.86} & 102.86 & 102.88 & 13941.28 & 13941.31 & 13941.28 & 13941.31 \\
 &  &  & 0.90  & 0.20  & {\bf 80.95} & 84.23 & 84.92 & {\it 82.58} & 82.59 & 82.60 & 82.61 & 82.60 & 82.61 \\
 &  &  &  & 0.80  & {\bf 101.25} & {\it 102.55} & 102.96 & 109.51 & 109.51 & 109.60 & 109.60 & 109.60 & 109.60 \\
 &  &  & 0.99  & 0.20  & 81.17 & 72.84 & 72.24 & {\bf 71.52} & {\it 71.53} & 30180.33 & 30180.34 & 30186.06 & 30186.06 \\
 &  &  &  & 0.80  & 100.39 & {\bf 91.05} & {\it 91.19} & 97.63 & 97.63 & 101.04 & 101.04 & 109.28 & 109.26 \\\hline\hline
\end{tabular}
\end{center}
\label{}
\end{table}

\begin{table}[bp]
{\bf Table 2}; Fixed $n=50$ and changing other setting. The fist lines of table header without LSEs mean the optimized method for $\vtheta$.
The second lines of table header without LSEs mean the optimized criterion for $\lambda$.
In each row, bold font is the best result and italic font is the $2$nd best result.
Each values show rounded to two decimal places.
\begin{center}
\tabcolsep=1.1mm
\begin{tabular}{rrrrr|r|rr|rrrrrr} \hline\hline
\multicolumn{5}{c|}{Setting} & \multicolumn{1}{c|}{LSEs} & \multicolumn{1}{c}{$C_p$} & \multicolumn{1}{c|}{$MC_p$} & \multicolumn{2}{c}{$s=1$} & \multicolumn{2}{c}{$s=5$} & \multicolumn{2}{c}{$s\to\infty$} \\
\multicolumn{1}{c}{$p$} & \multicolumn{1}{c}{$k$} & \multicolumn{1}{c}{$q$} & \multicolumn{1}{c}{$\rho_a$} & \multicolumn{1}{c|}{$\rho_y$} & \multicolumn{1}{c|}{} & \multicolumn{1}{c}{$C_p$} & \multicolumn{1}{c|}{$MC_p$} & \multicolumn{1}{c}{$C_p$} & \multicolumn{1}{c}{$MC_p$} & \multicolumn{1}{c}{$C_p$} & \multicolumn{1}{c}{$MC_p$} & \multicolumn{1}{c}{$C_p$} & \multicolumn{1}{c}{$MC_p$} \\\hline\hline
5 & 5 & 3 & 0.20  & 0.20  & 19.96 & 19.82 & 19.83 & {\bf 19.74} & 19.75 & 19.74 & 19.75 & {\it 19.74} & 19.75 \\
 &  &  &  & 0.80  & 29.12 & {\bf 24.75} & {\it 24.75} & 28.46 & 28.46 & 28.46 & 28.46 & 28.46 & 28.46 \\
 &  &  & 0.90  & 0.20  & 19.57 & 15.44 & 15.29 & 16.07 & 16.07 & {\bf 15.11} & {\it 15.11} & 15.18 & 15.16 \\
 &  &  &  & 0.80  & 29.90 & {\it 20.24} & {\bf 20.11} & 24.41 & 24.41 & 22.81 & 22.81 & 22.83 & 22.83 \\
 &  &  & 0.99  & 0.20  & 19.78 & {\it 14.05} & {\bf 13.90} & 14.88 & 14.88 & 14.22 & 14.22 & 15.13 & 15.13 \\
 &  &  &  & 0.80  & 28.98 & {\it 18.07} & {\bf 17.98} & 21.84 & 21.84 & 21.13 & 21.13 & 22.23 & 22.23 \\\cline{2-14}
 & 10 & 3 & 0.20  & 0.20  & {\bf 35.97} & 36.07 & 36.07 & {\it 36.00} & 36.00 & 36.00 & 36.00 & 36.00 & 36.00 \\
 &  &  &  & 0.80  & 54.27 & {\it 47.34} & {\bf 47.34} & 54.71 & 54.71 & 54.71 & 54.71 & 54.71 & 54.71 \\
 &  &  & 0.90  & 0.20  & {\bf 36.57} & 37.45 & 37.73 & {\it 37.15} & 37.15 & 38.63 & 38.63 & 40.50 & 40.50 \\
 &  &  &  & 0.80  & 54.49 & {\bf 49.53} & {\it 49.73} & 57.73 & 57.73 & 61.89 & 61.89 & 66.02 & 66.02 \\
 &  &  & 0.99  & 0.20  & 35.88 & 31.25 & 31.71 & {\it 30.52} & {\bf 30.52} & 36.84 & 36.84 & 45.11 & 45.12 \\
 &  &  &  & 0.80  & 53.57 & {\bf 42.42} & {\it 42.86} & 47.80 & 47.80 & 57.39 & 57.39 & 67.70 & 67.70 \\\hline
10 & 5 & 3 & 0.20  & 0.20  & 24.60 & 24.06 & 24.08 & {\bf 23.88} & 23.91 & {\it 23.89} & 23.91 & 23.89 & 23.91 \\
 &  &  &  & 0.80  & 32.66 & {\bf 28.57} & {\it 28.58} & 31.35 & 31.35 & 31.35 & 31.36 & 31.35 & 31.36 \\
 &  &  & 0.90  & 0.20  & 24.10 & 18.14 & 17.80 & 19.17 & 19.18 & 17.42 & 17.43 & {\bf 17.30} & {\it 17.31} \\
 &  &  &  & 0.80  & 32.67 & {\it 21.18} & {\bf 20.82} & 25.37 & 25.38 & 23.26 & 23.26 & 23.44 & 23.44 \\
 &  &  & 0.99  & 0.20  & 23.94 & 16.05 & {\bf 15.67} & 17.27 & 17.27 & {\it 15.80} & 15.81 & 16.07 & 16.07 \\
 &  &  &  & 0.80  & 32.39 & {\it 18.37} & {\bf 17.91} & 22.51 & 22.51 & 19.34 & 19.34 & 19.22 & 19.20 \\\cline{3-14}
 &  & 7 & 0.20  & 0.20  & 43.90 & 43.93 & 43.95 & 43.77 & 43.79 & {\it 43.77} & 43.79 & {\bf 43.77} & 43.79 \\
 &  &  &  & 0.80  & 54.39 & {\bf 50.34} & {\it 50.35} & 54.05 & 54.06 & 54.05 & 54.06 & 54.05 & 54.06 \\
 &  &  & 0.90  & 0.20  & 44.14 & 35.99 & {\bf 35.53} & 36.75 & 36.76 & {\it 35.62} & 35.62 & 36.22 & 36.21 \\
 &  &  &  & 0.80  & 54.25 & {\it 43.31} & {\bf 43.20} & 46.54 & 46.54 & 47.43 & 47.43 & 49.32 & 49.30 \\
 &  &  & 0.99  & 0.20  & 44.12 & 32.77 & 31.95 & 34.74 & 34.74 & 31.13 & 31.13 & {\bf 31.01} & {\it 31.01} \\
 &  &  &  & 0.80  & 55.12 & {\it 38.60} & {\bf 37.89} & 44.27 & 44.27 & 40.55 & 40.56 & 40.44 & 40.44 \\\cline{2-14}
 & 10 & 3 & 0.20  & 0.20  & 45.11 & 45.38 & 45.40 & 44.65 & 44.65 & {\bf 44.65} & 44.65 & {\it 44.65} & 44.65 \\
 &  &  &  & 0.80  & 61.07 & {\bf 53.82} & {\it 53.83} & 60.25 & 60.25 & 60.25 & 60.25 & 60.25 & 60.25 \\
 &  &  & 0.90  & 0.20  & 44.55 & 43.79 & 44.62 & {\bf 43.78} & {\it 43.78} & 48.22 & 48.23 & 52.93 & 52.93 \\
 &  &  &  & 0.80  & 60.04 & {\bf 50.56} & {\it 51.41} & 57.62 & 57.62 & 68.87 & 68.87 & 81.49 & 81.48 \\
 &  &  & 0.99  & 0.20  & 44.62 & {\bf 32.73} & {\it 32.77} & 33.49 & 33.49 & 37.95 & 37.95 & 43.06 & 43.06 \\
 &  &  &  & 0.80  & 60.13 & {\it 32.12} & {\bf 31.37} & 40.12 & 40.12 & 37.86 & 37.86 & 40.12 & 40.11 \\\cline{3-14}
 &  & 7 & 0.20  & 0.20  & {\bf 80.92} & 82.06 & 82.07 & {\it 81.57} & 81.58 & 81.57 & 81.58 & 81.57 & 81.58 \\
 &  &  &  & 0.80  & 101.53 & {\bf 98.21} & {\it 98.21} & 104.56 & 104.56 & 104.56 & 104.56 & 104.56 & 104.56 \\
 &  &  & 0.90  & 0.20  & {\bf 81.57} & 83.58 & 83.93 & {\it 83.16} & 83.16 & 84.19 & 84.19 & 86.52 & 86.52 \\
 &  &  &  & 0.80  & {\bf 100.66} & {\it 104.45} & 104.59 & 112.72 & 112.72 & 113.29 & 113.29 & 114.06 & 114.06 \\
 &  &  & 0.99  & 0.20  & 80.62 & 74.55 & 75.42 & {\bf 74.10} & {\it 74.11} & 79.00 & 79.00 & 88.84 & 88.85 \\
 &  &  &  & 0.80  & 101.18 & {\bf 100.05} & {\it 100.84} & 109.44 & 109.44 & 117.36 & 117.36 & 128.16 & 128.16 \\\hline\hline
\end{tabular}
\end{center}
\caption{}
\label{}
\end{table}

From these results, we note that we can use improved estimator with using $\vtheta$ and $\lambda$ in almost cases.
In specially, when $n$ is small, we can see $\vtheta$ and $\lambda$ have a more improvement.
On the other hand, when $n$ becomes large, LSEs ($\hat{\vmu}$ and $\hat{\vXi}$) are best method if $p$, $k$ and $q$ are also large.

When $\rho_y$ is small, the plug-in type methods are the best and $2$nd best methods in many cases.
Furthermore, although repeat plug-in methods with $s=5$ or $s\to \infty$ are sometimes better than $s=1$, when $n$ is small and $p$, $k$ and $q$ are large, these methods show some bad results.
The reason of this, we consider that we use the $\hat{\theta}_i^{[\infty]}(\lambda|\vS)$ ($i=1,\ldots,k$) while the condition for \tref{t3} or \tref{t4} does not hold.
Moreover, from these results, we can consider that we can select $s$, which is the number of repetitions, for each data in a similar manner with Nagai, Fukui and Yanagihara (2013), though we fix $s$ as $1$, $5$ and $\infty$ in this simulation. 

Focus on the optimization for $\lambda$, for plug-in type methods, $C_p$ criterion is sufficient criterion in almost cases.
However, for $C_p$ type optimization method for $\vtheta$, sometimes $MC_p$ criterion becomes better criterion for $\lambda$.

\vskip 8pt
\centerline{\bf \large 5. Conclusion}
\setcounter{section}{5}
\setcounter{equation}{0}
\vskip 8pt

In the present paper, we consider estimating the longitudinal trend and varying coefficient curves by using known basis function.
However, the ordinary estimation method by using LSEs with basis function occurs overfitting problem.
Addition to this, in the ordinary estimation method,  there is unstably problem when there is high correlation between the some columns in $\vA$.
In order to avoid these problems, Nagai (2011) proposed the penalized estimator by extending the method in Yanagihara, Nagai and Satoh (2012).
Nagai (2011) also proposed the $C_p$-type criteria, that are shown in Section 2, and the optimization method based on them.

On the other hand, in Yanagihara, Nagai and Satoh (2009), there are another methods proposed by Nagai, Yanagihara and Satoh (2012).
In the present paper, we extend their method into our estimator for the GMANOVA model for avoiding overfitting and unstably problems.
For extending their plug-in type optimization method, we calculate $\PMSE[\hat{\vY}_{\vtheta,\lambda}]$.
Then, we derive the optimal $\vtheta$ which minimizes PMSE is derived as \eref{Eq24} when $\lambda$ is given.
Based on it, we proposed the plug-in and repeat plug-in optimization methods by substituting the estimators into unknown matrix $\vSigma$ and $\vXi$.
We show some properties about the repeat plug-in optimization method under some condition.
Further, under the same condition, convergence of the repetition is shown at \tref{t4} as similar as Nagai, Yanagihara and Satoh (2012).

By conducting numerical studies as simulations, we compare the LSEs which are derived as $\hat{\vmu}$ and $\hat{\vXi}$ and the $C_p$-type criteria for optimizing $\vtheta$ with our proposed optimization methods.
When $n$ is small, our proposed penalized estimators is better than the LSEs in almost all cases.
In some situation, the plug-in type methods become best method.
Through the results, we consider the optimal number of repetitions may be exist.
Thus, we consider extending Nagai, Fukui and Yanagihara (2013).
Moreover, we note that the $C_p$ criterion and $MC_p$ criterion are similar results in our case.

\vskip 8pt
\centerline{\bf \large A.1. Proof of the expanded PMSE}
\setcounter{section}{1}
\setcounter{equation}{0}
\renewcommand{\thesection}{\Alph{section}}
\vskip 8pt

In this section, we show that $E_\vY[\tr\{(\hat{\vY}_{\vtheta,\lambda}-E_\vY[\vY])\vSigma^{-1}(\hat{\vY}_{\vtheta,\lambda}-E_\vY[\vY])'\}]$ is corresponding to $f(\vtheta,\lambda|\vSigma,\vXi,\vmu)$.
In addition to this, we show that the optimal $\vtheta$ is derived by minimizing \eref{Eq23}.
This method is similar as that in Nagai, Yanagihara and Satoh (2012).

Firstly, we decompose $(\vA,\1_n)$ as $\vP(\vA,\1_n)\vQ_1=\vL$ where $\vP$ and $\vQ_1$ are $n \times n$ and $(k+1) \times (k+1)$ orthogonal matrices, $\vQ_1=\begin{pmatrix} \vQ & \0_k \\ \0_k' & 1\end{pmatrix}$, and
 $\vL=(\diag(\sqrt{d_1},\ldots,\sqrt{d_k},\sqrt{n}),\0_{k+1}\0_{n-k-1}')'$ by using the singular value decomposition since $\vA'\1_n=\0_k$ and $\vQ'\vA'\vA\vQ=\vD=\diag(d_1,\ldots,d_k)$.
Letting $\vZ=\vP\vY$, $\vGamma=\vQ'\vXi$ and $\vmV=\vP\vmE$, we can rewrite the GMANOVA model in \eref{Eq13} as follows;
\begin{eqnarray}
\vZ=\vL\begin{pmatrix}\vGamma\\\vmu'\end{pmatrix}\vX'+\vmV.
\label{A1}
\end{eqnarray}
Here, we note that $\Cov[\rmvec(\vZ)]=\vSigma\otimes\vI_n$ since $\vP\vP'=\vI_n$.
Since $\vP(\vA,\1_n)\vQ_1=\vL$, we obtain  $\vP\vA\vQ=\vC=(\diag(\sqrt{d_1},\ldots,\sqrt{d_k}),\0_k\0_{n-k}')'$.
Using this result, $\hat{\vGamma}_{\vtheta,\lambda}=\vQ'\hat{\vXi}_{\vtheta,\lambda}=(\vD+\vTheta)^{-1}\vC'\vZ\vX(\vX'\vX+\lambda\vK)^{-1}$ is an estimator for $\vGamma$.
Then, since $\hat{\vZ}_{\vtheta,\lambda}=\vL(\hat{\vGamma}_{\vtheta,\lambda}',\hat{\vmu}_\lambda)'\vX'=\vP\hat{\vY}_{\vtheta,\lambda}$ and $E_\vY[\vY]=\vP E_\vZ[\vZ]$, we derive
$$E_\vY\!\left[\!\tr\!\left\{\!\left(\hat{\vY}_{\vtheta,\lambda}-E_\vY[\vY]\right)\vSigma^{-1}\left(\hat{\vY}_{\vtheta,\lambda}-E_\vY[\vY]\right)'\!\right\}\right]\!=\!E_\vZ\!\left[\!\tr\!\left\{\!\left(\hat{\vZ}_{\vtheta,\lambda}-E_\vZ[\vZ]\right)\vSigma^{-1}\left(\hat{\vZ}_{\vtheta,\lambda}-E_\vZ[\vZ]\right)'\!\right\}\right].$$
Thus, we consider minimizing the right hand of the above equation in order to derive \eref{Eq24} which minimizes the left hand of it.
Since $\hat{\vZ}_{\vtheta,\lambda}-E_\vZ[\vZ]=\vL\{(\hat{\vGamma}_{\vtheta,\lambda}',\hat{\vmu}_\lambda)'-(\vGamma',\vmu)'\}\vX'$ and the definition of $\vL$, letting $\vZ=(\vz_1,\ldots,\vz_n)'$ and $\vGamma=(\vgamma_1,\ldots,\vgamma_k)'$, 
\begin{align*}
\hat{\vZ}_{\vtheta,\lambda}-E_\vZ[\vZ]&=
	\begin{pmatrix}
	\vD^{1/2}\hat{\vGamma}_{\vtheta,\lambda}-\vD^{1/2}\vGamma\\
	\sqrt{n}\hat{\vmu}_\lambda'-\sqrt{n}\vmu'\\
	\0_{n-k-1}\0_q'
	\end{pmatrix}\vX'=
	\begin{pmatrix}
	\dfrac{d_1}{d_1+\theta_1}\vz_1'\vX(\vX'\vX+\lambda\vK)^{-1}-\sqrt{d_1}\vgamma_1'\\
	\vdots\\
	\dfrac{d_1}{d_1+\theta_1}\vz_k'\vX(\vX'\vX+\lambda\vK)^{-1}-\sqrt{d_k}\vgamma_k'\\
	\sqrt{n}(\hat{\vmu}_\lambda-\vmu)'\\
	\0_{n-k-1}\0_q'
	\end{pmatrix}\vX'.
\end{align*}
Thus, we derive the following results;
\begin{align}
&\tr\!\left\{\!\left(\hat{\vZ}_{\vtheta,\lambda}-E_\vZ[\vZ]\right)\vSigma^{-1}\left(\hat{\vZ}_{\vtheta,\lambda}-E_\vZ[\vZ]\right)'\!\right\}\nonumber\\
&=\sum_{i=1}^k\left(\dfrac{d_i}{d_i+\theta_i}(\vX'\vX+\lambda\vK')^{-1}\vX'\vz_i-\sqrt{d_i}\vgamma_i\right)'\vX'\vSigma^{-1}\vX\left(\dfrac{d_i}{d_i+\theta_i}(\vX'\vX+\lambda\vK')^{-1}\vX'\vz_i-\sqrt{d_i}\vgamma_i\right) \label{EqA2}\\
&\nonumber +n(\hat{\vmu}_\lambda-\vmu)'\vX'\vSigma^{-1}\vX(\hat{\vmu}_\lambda-\vmu).
\end{align}
Hence, minimizing the PMSE in \eref{Eq22} coincides with minimizing the expectation of \eref{EqA2}.

We consider calculating the expectation of \eref{EqA2}.
Since $\hat{\vmu}_\lambda=(\vX'\vX+\lambda\vK)^{-1}\vX'\vY'\1_n/n$, and we assume $\vA'\1_n=\0_k$, we derive $E[\hat{\vmu}_\lambda]=(\vX'\vX+\lambda\vK)^{-1}\vX'\vX\vmu$ and $E[\hat{\vmu}_\lambda'\vX'\vSigma^{-1}\vX\hat{\vmu}_\lambda]=E[\1_n'\vY\vG_\lambda\vSigma^{-1}$ $\vG_\lambda\vY'\1_n]/n^2=\tr(E[\vY'\1_n\1_n'\vY]\vG_\lambda\vSigma^{-1}\vG_\lambda)/n^2$, where $\vG_\lambda=\vX(\vX'\vX+\lambda\vK)^{-1}\vX'$ which is a symmetric matrix.
Further, since $\vY=(\vy_1,\ldots,\vy_n)'$ and $\vy_i \indep \vy_j$ ($i \neq j$),
\begin{align*}
E[\vY'\1_n\1_n'\vY]
=E\left[\left(\sum_{i=1}^n\vy_i\right)\left(\sum_{j=1}^n\vy_j'\right)\right]=\sum_{i=1}^n\left(E[\vy_i\vy_i']+\sum_{j\neq i}E[\vy_i]E[\vy_j]'\right).
\end{align*}
Since $\vSigma=E[(\vy_i-E[\vy_i])(\vy_i-E[\vy_i])']=E[\vy_i\vy_i']-E[\vy_i]E[\vy_i]'$ 
we derive
$$\sum_{i=1}^nE[\vy_i\vy_i']=n\vSigma+\sum_{i=1}^nE[\vy_i]E[\vy_i]'.$$
Furthermore, since $\vA'\1_n=\0_k$, we obtain 
\begin{align*}
\sum_{i=1}^n\sum_{j\neq i}E[\vy_i]E[\vy_j]'
&=\sum_{i=1}^n\left(\sum_{j=1}^nE[\vy_i]E[\vy_j]'-E[\vy_i]E[\vy_i]'\right)\\
&=\vX(\vXi',\vmu)\begin{pmatrix}\vA'\\\1_n'\end{pmatrix}\1_n\1_n'(\vA,\1_n)\begin{pmatrix}\vXi\\\vmu'\end{pmatrix}\vX'-\sum_{i=1}^nE[\vy_i]E[\vy_i]'\\
&=\vX(\vXi',\vmu)\begin{pmatrix}\0_k \\ n\end{pmatrix}(\0_k',n)\begin{pmatrix}\vXi\\\vmu'\end{pmatrix}\vX'-\sum_{i=1}^nE[\vy_i]E[\vy_i]'\\
&=n^2\vX\vmu\vmu'\vX'-\sum_{i=1}^nE[\vy_i]E[\vy_i]'.
\end{align*}
Hence, $E[\vY'\1_n\1_n'\vY]=n\vSigma+n^2\vX\vmu\vmu'\vX'$ is derived.
Thus, we obtain
\begin{align*}
E[\hat{\vmu}_\lambda'\vX'\vSigma^{-1}\vX\hat{\vmu}_\lambda]
&=\dfrac{1}{n^2}\tr\left\{(n\vSigma+n^2\vX\vmu\vmu'\vX')\vG_\lambda\vSigma^{-1}\vG_\lambda\right\}\\
&=\dfrac{1}{n}\tr(\vSigma\vG_\lambda\vSigma^{-1}\vG_\lambda)+\vmu'\vX'\vG_\lambda\vSigma^{-1}\vG_\lambda\vX\vmu.
\end{align*}
Using these results, the expectation of $(\hat{\vmu}_\lambda-\vmu)'\vX'\vSigma^{-1}\vX(\hat{\vmu}_\lambda-\vmu)$ is derived as follows;
\begin{align}
&E[(\hat{\vmu}_\lambda-\vmu)'\vX'\vSigma^{-1}\vX(\hat{\vmu}_\lambda-\vmu)]\nonumber\\
&=\dfrac{1}{n}\tr(\vSigma\vG_\lambda\vSigma^{-1}\vG_\lambda)+\vmu'\vX'\vG_\lambda\vSigma^{-1}\vG_\lambda\vX\vmu-2\vmu'\vX'\vSigma^{-1}\vG_\lambda\vX\vmu+\vmu'\vX'\vSigma^{-1}\vX\vmu\nonumber\\
&=\dfrac{1}{n}\tr(\vSigma\vG_\lambda\vSigma^{-1}\vG_\lambda)+\vmu'\vX'(\vG_\lambda-\vI_p)\vSigma^{-1}(\vG_\lambda-\vI_p)\vX\vmu.
\label{EqA3}
\end{align}
Next, we calculate the expectation of the first term in right hand in \eref{EqA2}.
That is, we calculate 
\begin{align}
\sum_{i=1}^kE\!\left[\!\left(\dfrac{d_i}{d_i+\theta_i}(\vX'\vX+\lambda\vK')^{-1}\vX'\vz_i-\sqrt{d_i}\vgamma_i\!\right)'\!\vX'\vSigma^{-1}\vX\!\left(\dfrac{d_i}{d_i+\theta_i}(\vX'\vX+\lambda\vK')^{-1}\vX'\vz_i-\sqrt{d_i}\vgamma_i\right)\!\right].
\label{EqA4}
\end{align}
Here, we can obtain the following result;
\begin{eqnarray*}
&\left(\dfrac{d_i}{d_i+\theta_i}(\vX'\vX+\lambda\vK')^{-1}\vX'\vz_i-\sqrt{d_i}\vgamma_i\!\right)'\!\vX'\vSigma^{-1}\vX\!\left(\dfrac{d_i}{d_i+\theta_i}(\vX'\vX+\lambda\vK')^{-1}\vX'\vz_i-\sqrt{d_i}\vgamma_i\right)\\
&=\left(\dfrac{d_i}{d_i+\theta_i}\right)^2\vz_i'\vG_\lambda\vSigma^{-1}\vG_\lambda\vz_i-2\dfrac{\sqrt{d_i}d_i}{d_i+\theta_i}\vz_i'\vG_\lambda\vSigma^{-1}\vX\vgamma_i+d_i\vgamma_i'\vX'\vSigma^{-1}\vX\vgamma_i,
\end{eqnarray*}
where we recall $\vG_\lambda=\vX(\vX'\vX+\lambda\vK)^{-1}\vX'$ and it is a symmetric matrix.
In order to obtain \eref{EqA4}, we derive the expectation of the above result.

Recall $\ve_i$ is an $n$-dimensional vector whose only $i$th element is one and other elements are zeros.
Since $\vz_i'$ is the $i$th row of $\vZ$, we can express $\vz_i$ as $\vz_i'=\ve_i'\vZ$.
Further, since $E[\vZ]=\vL(\vGamma',\vmu)'\vX'$, we obtain $E[\vz_i'\vG_\lambda\vSigma^{-1}\vX\vgamma_i]=\ve_i'E[\vZ]\vG_\lambda\vSigma^{-1}\vX\vgamma_i=\sqrt{d_i}\vgamma_i'\vX'\vG_\lambda\vSigma^{-1}\vX\vgamma_i$.
Further, since $\Cov[\vz_i]=\vSigma$,  $E[\vz_i'\vG_\lambda\vSigma^{-1}\vG_\lambda\vz_i]=\tr(E[\vZ'\ve_i\ve_i'\vZ]\vG_\lambda\vSigma^{-1}\vG_\lambda)$, and  $E[\vZ'\ve_i\ve_i'\vZ]=\vSigma+d_i\vX\vgamma_i\vgamma_i'\vX'$, we derive 
\begin{align*}
E[\vz_i'\vG_\lambda\vSigma^{-1}\vG_\lambda\vz_i]
&=\tr(\vSigma\vG_\lambda\vSigma^{-1}\vG_\lambda)+d_i\vgamma_i'\vX'\vG_\lambda\vSigma^{-1}\vG_\lambda\vX\vgamma_i.
\end{align*}

Thus, \eref{EqA4} can be derived as follows;
\begin{align}
&\sum_{i=1}^kE\!\left[\!\left(\dfrac{d_i}{d_i+\theta_i}(\vX'\vX+\lambda\vK')^{-1}\vX'\vz_i-\sqrt{d_i}\vgamma_i\!\right)'\!\vX'\vSigma^{-1}\vX\!\left(\dfrac{d_i}{d_i+\theta_i}(\vX'\vX+\lambda\vK')^{-1}\vX'\vz_i-\sqrt{d_i}\vgamma_i\right)\!\right]\nonumber\\
&=\sum_{i=1}^k\left\{\left(\dfrac{d_i}{d_i+\theta_i}\right)^2\left(\tr(\vSigma\vG_\lambda\vSigma^{-1}\vG_\lambda)+d_i\vgamma_i'\vX'\vG_\lambda\vSigma^{-1}\vG_\lambda\vX\vgamma_i\right)-2\dfrac{d_i^2}{d_i+\theta_i}\vgamma_i'\vX'\vG_\lambda\vSigma^{-1}\vX\vgamma_i\right\}\nonumber\\
&\ \ \ +\sum_{i=1}^kd_i\vgamma_i'\vX'\vSigma^{-1}\vX\vgamma_i\nonumber\\
\begin{split}
	&=\sum_{i=1}^k\left\{\left(\dfrac{d_i}{d_i+\theta_i}\right)^2\left[\tr(\vSigma\vG_\lambda\vSigma^{-1}\vG_\lambda)+d_i\vgamma_i'\vX'\vG_\lambda\vSigma^{-1}\vG_\lambda\vX\vgamma_i\right]-\dfrac{2d_i^2}{d_i+\theta_i}\vgamma_i'\vX'\vG_\lambda\vSigma^{-1}\vX\vgamma_i\right\}\\[7mm]
&\ \ +\tr(\vGamma\vX'\vSigma^{-1}\!\vX\vGamma'\vD).
	\end{split}
\label{EqA5}
\end{align}
Substituting the results in \eref{EqA3} and \eref{EqA5} into the expectation of \eref{EqA2}, $E_\vY[\tr\{(\hat{\vY}_{\vtheta,\lambda}-E_\vY[\vY])\vSigma^{-1}$ $(\hat{\vY}_{\vtheta,\lambda}-E_\vY[\vY])'\}]=f(\vtheta,\lambda|\vSigma,\vXi,\vmu)$ is derived.
Here, from \eref{EqA3} and \eref{EqA5}, we note that only the first term of \eref{EqA5}, who coincides with $\varphi_i(\theta_i,\lambda|\vSigma,\vXi)$ in \eref{Eq23}, depends on $\vtheta$.

\vskip 8pt
\centerline{\bf \large A.2. Proof of \eref{Eq24}}
\setcounter{section}{2}
\setcounter{equation}{0}
\renewcommand{\thesection}{\Alph{section}}
\vskip 8pt

In this Section, we prove minimizing $\varphi_i(\theta_i,\lambda|\vSigma,\vXi)$ in \eref{Eq23} for fixed $\lambda$ is derived as \eref{Eq24}.
From above calculation, we can see that only the first term of \eref{EqA5} depends on $\theta_i$ ($i=1,\ldots,k$) in $\PMSE[\hat{\vY}_{\vtheta,\lambda}]$.
Thus, minimizing $\PMSE[\hat{\vY}_{\vtheta,\lambda}]$ is corresponding with minimizing the first term of \eref{EqA5}.
Recalling $\beta_{1,i}(\lambda|\vSigma,\vgamma_i)=d_i\vgamma_i'\vX'\vG_\lambda\vSigma^{-1}\vX\vgamma_i$, $\beta_{2,i}(\lambda|\vSigma,\vgamma_i)=d_i\vgamma_i'\vX'\vG_\lambda\vSigma^{-1}\vG_\lambda'\vX\vgamma_i$ and $\alpha_i=\tr(\vSigma\vG_\lambda\vSigma^{-1}\vG_\lambda')+\beta_{2,i}(\lambda|\vSigma,\vgamma_i)$, then $\varphi_i(\theta_i,\lambda|\vSigma,\vgamma_i)$ is written as follows;
\begin{align*}
\varphi_i(\theta_i,\lambda|\vSigma,\vgamma_i)=\left(\dfrac{d_i}{d_i+\theta_i}\!\right)^2\alpha_i(\lambda|\vSigma,\vgamma_i)-\dfrac{2d_i}{d_i+\theta_i}\beta_{1,i}(\lambda|\vSigma,\vgamma_i).
\end{align*}
For simple expression, we write $\varphi_i(\theta_i,\lambda|\vSigma,\vgamma_i)$ as $g(\theta_i)$ in this Section.
In order to obtain the optimal $\theta_i$ which minimizes PMSE, we different $g(\theta_i)$ with respect to $\theta_i$.
\begin{align*}
g'(\theta_i)=\dfrac{\partial g(\theta_i)}{\partial \theta_i}=\dfrac{2d_i}{(d_i+\theta_i)^3}\{(d_i+\theta_i)\beta_{1,i}(\lambda|\vSigma,\vgamma_i)-d_i\alpha_i(\lambda|\vSigma,\vgamma_i)\}.
\end{align*}
Here, $g'(\theta_i)|_{\theta_i=0}=2(\beta_{1,i}(\lambda|\vSigma,\vgamma_i)-\alpha_i(\lambda|\vSigma,\vgamma_i)\}/d_i$.
When we consider the solution of $g'(\theta_i)|_{\theta_i=\bar{\theta_i}}=0$ is derived as $\bar{\theta}_i=\infty$ and $d_i\{\alpha_1(\lambda|\vSigma,\vgamma_i)-\beta_{1,i}(\lambda|\vSigma,\vgamma_i)\}/\beta_{1,i}(\lambda|\vSigma,\vgamma_i)$.
We note that the sign of $g'(\theta_i)|_{\theta_i=0}$ is equal to $\beta_{1,i}(\lambda|\vSigma,\vgamma_i)-\alpha_i(\lambda|\vSigma,\vgamma_i)$.
Thus, we consider three conditions as $\beta_{1,i}(\lambda|\vSigma,\vgamma_i)>\alpha_i(\lambda|\vSigma,\vgamma_i)$ ($>0$), $\alpha_i(\lambda|\vSigma,\vgamma_i) \ge \beta_{1,i}(\lambda|\vSigma,\vgamma_i)>0$ and ($\alpha_i(\lambda|\vSigma,\vgamma_i)>$) $0 \ge \beta_{1,i}(\lambda|\vSigma,\vgamma_i)$.
Under the first condition, since ${\rm sign}(g'(\theta_i)|_{\theta_i=0})>0$, we derive the optimal $\theta_i$ is $0$.
Under the second condition, ${\rm sign}(g'(\theta_i)|_{\theta_i=0})\le 0$ and $\beta_{1,i}(\lambda|\vSigma,\vgamma_i)>0$, the optimal $\theta_i$ is derived from extreme value as $d_i\{\alpha_i(\lambda|\vSigma,\vgamma_i)-\beta_{1,i}(\lambda|\vSigma,\vgamma_i)\}/\beta_{1,i}(\lambda|\vSigma,\vgamma_i)$.
In the last condition $0\ge \beta_{1,i}(\lambda|\vSigma,\vgamma_i)$, since ${\rm sign}(g'(\theta_i)|_{\theta_i=0})<0$ from $\alpha_i(\lambda|\vSigma,\vgamma_i)>0$ and $\beta_{1,i}(\lambda|\vSigma,\vgamma_i)\le 0$, the optimal $\theta_i$ is derived as $\infty$.
These results show the optimal $\theta_i$ is obtained as \eref{Eq24}.

\if01
\vskip 8pt
\centerline{\bf \large A.3. Proof of Theorem 2.1}
\setcounter{section}{3}
\setcounter{equation}{0}
\renewcommand{\thesection}{\Alph{section}}
\vskip 8pt

In this subsection, we prove Theorem 2.1.
Theorem 2.1 means that $\hat{\theta}_i^{[s]}(\lambda|\vS,\hat{\vXi}_{\hat{\vtheta}^{[s-1]},\lambda})$ is monotone increasing with $s$ becomes large.
Firstly, we note that the $i$th row vector of $\hat{\vGamma}_{\vtheta,\lambda}$ is $\hat{\vgamma}_i(\theta_i,\lambda)'=\sqrt{d_i}\vz_i'\vX(\vX'\vX+\lambda\vK)^{-1}/(d_i+\theta_i)$.
Here, in order to simple expression, we write $\hat{\theta}_i^{[s]}(\lambda|\vS,\hat{\vXi}_{\hat{\vtheta}^{[s]},\lambda})$ as $\hat{\theta}_i^{[s]}(\lambda)$.
Using this, we derive
\begin{align*}
\hat{\theta}_i^{[s+1]}(\lambda)=\left\{\begin{tabular}{ll}
\multicolumn{2}{l}{$0$\ (if $\dfrac{d_i^2}{(d_i+\hat{\theta}_i^{[s]}(\lambda))^2}\vz_i'\vG_\lambda^2\vS^{-1}\vG_\lambda'\vz_i>\tr(\vS\vG_\lambda\vS^{-1}\vG_\lambda')+\dfrac{d_i^2}{(d_i+\hat{\theta}_i^{[s]}(\lambda))^2}\vz_i'\vG_\lambda^2\vS^{-1}\vG_\lambda^{'2}\vz_i$)}\\
\multicolumn{2}{l}{$\dfrac{\tr(\vS\vG_\lambda\vS^{-1}\vG_\lambda')+\{d_i/(d_i+\hat{\theta}_i^{[s]}(\lambda))\}^2\vz_i'\vG_\lambda^2\vS^{-1}(\vG_\lambda-\vI_p)'\vG_\lambda'\vz_i}{d_i\vz_i'\vG_\lambda^2\vS^{-1}\vG_\lambda'\vz_i/(d_i+\hat{\theta}_i^{[s]}(\lambda))^2}$} \\
\multicolumn{2}{r}{(if $\tr(\vS\vG_\lambda\vS^{-1}\vG_\lambda')\!+\!\dfrac{d_i^2}{(d_i\!+\!\hat{\theta}_i^{[s]}(\lambda))^2}\vz_i'\vG_\lambda^2\vS^{-1}\vG_\lambda^{'2}\vz_i\ge \dfrac{d_i^2}{(d_i\!+\!\hat{\theta}_i^{[s]}(\lambda))^2}\vz_i'\vG_\lambda^2\vS^{-1}\vG_\lambda'\vz_i>0$)}\\
$\infty$ & (if $0\ge \dfrac{d_i^2}{(d_i+\hat{\theta}_i^{[s]}(\lambda))^2}\vz_i'\vG_\lambda^2\vS^{-1}\vG_\lambda'\vz_i$)
\end{tabular}\right. \!\!\!.
\end{align*}

At first, we prove $\hat{\theta}_i^{[2]}(\lambda)\ge \hat{\theta}_i^{[1]}(\lambda)$.
If $0\ge d_i^2\vz_i'\vG_\lambda^2\vS^{-1}\vG_\lambda'\vz_i/\{(d_i+\hat{\theta}_i^{[1]}(\lambda))^2\}$, then $\hat{\theta}_i^{[2]}(\lambda)=\infty$.
This fact shows $\hat{\theta}_i^{[2]}(\lambda) \ge \hat{\theta}_i^{[1]}$.

Further, we consider $\tr(\vS\vG_\lambda\vS^{-1}\vG_\lambda')+d_i^2\vz_i'\vG_\lambda^2\vS^{-1}\vG_\lambda^{'2}\vz_i/(d_i+\hat{\theta}_i^{[1]}(\lambda))^2\ge d_i^2\vz_i'\vG_\lambda^2\vS^{-1}\vG_\lambda'\vz_i/(d_i+\hat{\theta}_i^{[1]}(\lambda))^2>0$ is satisfied, then $\hat{\theta}_i^{[2]}(\lambda)\ge 0$.
Since $d_i^2\vz_i'\vG_\lambda^2\vS^{-1}\vG_\lambda'\vz_i/(d_i+\hat{\theta}_i^{[1]}(\lambda))^2>0$, we note $\vz_i'\vG_\lambda^2\vS^{-1}\vG_\lambda'\vz_i>0$.
This means that $\hat{\theta}_i^{[1]}(\lambda)<\infty$.
On the other hand, if $\vz_i'\vG_\lambda^2\vS^{-1}\vG_\lambda'\vz_i> \tr(\vS\vG_\lambda\vS^{-1}\vG_\lambda')+\vz_i'\vG_\lambda^2\vS^{-1}\vG_\lambda'^2\vz_i$, then $\hat{\theta}_i^{[1]}(\lambda)=0\le \hat{\theta}_i^{[2]}(\lambda)$.
Next, we consider the condition $\tr(\vS\vG_\lambda\vS^{-1}\vG_\lambda')+\vz_i'\vG_\lambda^2\vS^{-1}\vG_\lambda^{'2}\vz_i\ge \vz_i'\vG_\lambda^2\vS^{-1}\vG_\lambda'\vz_i>0$ is also satisfied.
Then, we derive 
\begin{align*}
&\hat{\theta}_i^{[2]}(\lambda)-\hat{\theta}_i^{[1]}(\lambda)\\
=&\left(\dfrac{1}{d_i}\dfrac{d_i}{(d_i+\hat{\theta}_i^{[1]})^2}\vz_i'\vG_\lambda^2\vS^{-1}\vG_\lambda'\vz_i\right)^{-1}\\
&\left[\left\{\dfrac{1}{d_i}\tr(\vS\vG_\lambda\vS^{-1}\vG_\lambda')+\dfrac{d_i}{(d_i+\hat{\theta}_i^{[1]})^2}\vz_i'\vG_\lambda^2\vS^{-1}(\vG_\lambda-\vI_p)'\vG_\lambda'\vz_i\right\}\right.\\
&\left.\qquad-\left\{\dfrac{d_i}{(d_i+\hat{\theta}_i^{[1]})^2}\tr(\vS\vG_\lambda\vS^{-1}\vG_\lambda')+\dfrac{d_i}{(d_i+\hat{\theta}_i^{[1]})^2}\vz_i'\vG_\lambda^2\vS^{-1}(\vG_\lambda-\vI_p)'\vG_\lambda'\vz_i\right\}\right]\\
=&\left(\dfrac{1}{d_i}\dfrac{d_i}{(d_i+\hat{\theta}_i^{[1]})^2}\vz_i'\vG_\lambda^2\vS^{-1}\vG_\lambda'\vz_i\right)^{-1}\dfrac{1}{d_i}\left\{1-\dfrac{d_i^2}{(d_i+\hat{\theta}_i^{[1]})^2}\right\}\tr(\vS\vG_\lambda\vS^{-1}\vG_\lambda').
\end{align*}
Since $\hat{\theta}_i^{[1]}\ge 0$, we derive $d_i^2/(d_i+\hat{\theta}_i^{[1]})^2\le 1$.
Then, we obtain $\hat{\theta}_i^{[2]}(\lambda)\ge \hat{\theta}_i^{[1]}(\lambda)$ since $d_i>0$, $\vz_u'\vG_\lambda^2\vS^{-1}\vG_\lambda\vz_i>0$ from the condition, and $\tr(\vS\vG_\lambda\vS^{-1}\vG_\lambda')$ from $\vS$ is a positive definite matrix.

Lastly, we consider $\hat{\theta}_i^{[2]}=0$.
That is, $d_i^2\vz_i'\vG_\lambda^2\vS^{-1}\vG_\lambda'\vz_i/(d_i+\hat{\theta}_i^{[1]}(\lambda))^2>\tr(\vS\vG_\lambda\vS^{-1}\vG_\lambda')+d_i^2\vz_i'\vG_\lambda^2\vS^{-1}\vG_\lambda^{'2}\vz_i/(d_i+\hat{\theta}_i^{[1]}(\lambda))^2$ is satisfied.
This condition shows that the following condition is also satisfied; $$\vz_i'\vG_\lambda^2\vS^{-1}\vG_\lambda'\vz_i\!-\!\vz_i'\vG_\lambda^2\vS^{-1}\vG_\lambda^{'2}\vz_i>\dfrac{d_i^2}{(d_i+\hat{\theta}_i^{[1]}(\lambda))^2}\!\left\{\vz_i'\vG_\lambda^2\vS^{-1}\vG_\lambda'\vz_i\!-\!\vz_i'\vG_\lambda^2\vS^{-1}\vG_\lambda^{'2}\vz_i\right\}\!>\tr(\vS\vG_\lambda\vS^{-1}\vG_\lambda'),$$
since $\hat{\theta}_i^{[1]}(\lambda)>0$ and $\tr(\vS\vG_\lambda\vS^{-1}\vG_\lambda)>0$.
Thus, if this condition is satisfied, $\hat{\theta}_i^{[1]}(\lambda)=0=\hat{\theta}_i^{[2]}(\lambda)$.

Hence, we obtain $\hat{\theta}_i^{[2]}(\lambda)\ge \hat{\theta}_i^{[1]}(\lambda)$ under all situations.
Next, using this result and mathematical induction, we prove $\hat{\theta}_i^{[s+1]}(\lambda)\ge\hat{\theta}_i^{[s]}(\lambda)$ for any natural number $s$.

First, when $\hat{\theta}_i^{[s+1]}(\lambda)=\infty$, $\hat{\theta}_i^{[s+1]}(\lambda)\ge \hat{\theta}_i^{[s]}(\lambda)$ is directly derived.
From here, we assume $\hat{\theta}_i^{[s]}(\lambda)\ge \hat{\theta}_i^{[s-1]}(\lambda)$ for some $s \ge 2$.
Then, we consider under $0<\hat{\theta}_i^{[s+1]}(\lambda)<\infty$ situation.
This situation means that  $d_i^2\vz_i'\vG_\lambda^2\vS^{-1}\vG_\lambda'\vz_i/(d_i+\hat{\theta}_i^{[s]}(\lambda))^2>0$ is satisfied.
This means $\vz_i'\vG_\lambda^2\vS^{-1}\vG_\lambda'\vz_i>0$ and then $d_i^2\vz_i'\vG_\lambda^2\vS^{-1}\vG_\lambda'\vz_i/(d_i+\hat{\theta}_i^{[s-1]}(\lambda))^2>0$ since $d_i^2>0$ and $\hat{\theta}_i^{[s-1]}(\lambda)\ge0$.
That is $\hat{\theta}_i^{[s]}(\lambda)<\infty$.
It is clearly that $\hat{\theta}_i^{[s+1]}(\lambda) \ge \hat{\theta}_i^{[s]}(\lambda)$ when $\hat{\theta}_i^{[s]}(\lambda)=0$.
When $0<\hat{\theta}_i^{[s]}(\lambda)<\infty$ under $0<\hat{\theta}_i^{[s+1]}(\lambda)<\infty$, we calculate $\hat{\theta}_i^{[s+1]}(\lambda)-\hat{\theta}_i^{[s]}(\lambda)$ as follows;
\begin{align*}
&\left(\dfrac{d_i}{(d_i+\hat{\theta}_i^{[s]}(\lambda))^2}\dfrac{d_i}{(d_i+\hat{\theta}_i^{[s-1]}(\lambda))^2}\vz_i'\vG_\lambda^2\vS^{-1}\vG_\lambda'\vz_i\right)\left(\hat{\theta}_i^{[s+1]}(\lambda)-\hat{\theta}_i^{[s]}(\lambda)\right)\\
=&\left[\dfrac{d_i}{(d_i+\hat{\theta}_i^{[s-1]}(\lambda))^2}\tr(\vS\vG_\lambda\vS^{-1}\vG_\lambda')+\dfrac{d_i}{(d_i+\hat{\theta}_i^{[s-1]}(\lambda))^2}\left(\dfrac{d_i}{d_i+\hat{\theta}_i^{[s]}(\lambda)}\right)^2\vz_i'\vG_\lambda^2\vS^{-1}(\vG_\lambda-\vI_p)'\vG_\lambda'\vz_i\right.\\
&\left.-\left\{\dfrac{d_i}{(d_i+\hat{\theta}_i^{[s]}(\lambda))^2}\tr(\vS\vG_\lambda\vS^{-1}\vG_\lambda')+\dfrac{d_i}{(d_i+\hat{\theta}_i^{[s]}(\lambda))^2}\left(\dfrac{d_i}{d_i+\hat{\theta}_i^{[s-1]}(\lambda)}\right)^2\vz_i'\vG_\lambda^2\vS^{-1}(\vG_\lambda-\vI_p)'\vG_\lambda'\vz_i\right\}\right]\\
=&\left(\dfrac{d_i}{(d_i+\hat{\theta}_i^{[s-1]}(\lambda))^2}-\dfrac{d_i}{(d_i+\hat{\theta}_i^{[s]}(\lambda))^2}\right)\tr(\vS\vG_\lambda\vS^{-1}\vG_\lambda').
\end{align*}
Moreover, we obtain 
\begin{align*}
&\dfrac{d_i}{(d_i+\hat{\theta}_i^{[s-1]}(\lambda))^2}-\dfrac{d_i}{(d_i+\hat{\theta}_i^{[s]}(\lambda))^2}\\
=&\dfrac{d_i}{(d_i+\hat{\theta}_i^{[s-1]}(\lambda))^2(d_i+\hat{\theta}_i^{[s]}(\lambda))^2}\left((d_i+\hat{\theta}_i^{[s]}(\lambda))^2-(d_i+\hat{\theta}_i^{[s-1]}(\lambda))^2\right)\\
=&\dfrac{d_i}{(d_i+\hat{\theta}_i^{[s-1]}(\lambda))^2(d_i+\hat{\theta}_i^{[s]}(\lambda))^2}(2d_i+\hat{\theta}_i^{[s]}(\lambda)+\hat{\theta}_i^{[s-1]}(\lambda))(\hat{\theta}_i^{[s]}(\lambda)-\hat{\theta}_i^{[s-1]}(\lambda)).
\end{align*}
Using the assumption $\hat{\theta}_i^{[s]}(\lambda)\ge\hat{\theta}_i^{[s-1]}(\lambda)$, we obtain the above term is nonnegative.
Thus, $\hat{\theta}_i^{[s+1]}(\lambda)\ge\hat{\theta}_i^{[s]}(\lambda)$ is derived since $d_i>0$, $\hat{\theta}_i^{[s]}(\lambda) \ge 0$, $\hat{\theta}_i^{[s-1]}(\lambda) \ge 0$, $\vz_i'\vG_\lambda^2\vS^{-1}\vG_\lambda'\vz_i>0$ from $0<\hat{\theta}_i^{[s+1]}(\lambda)<\infty$, and $\tr(\vS\vG_\lambda\vS^{-1}\vG_\lambda) > 0$.
Lastly, we consider $\hat{\theta}_i^{[s+1]}(\lambda)=0$.
This means that $d_i^2\vz_i'\vG_\lambda^2\vS^{-1}\vG_\lambda'\vz_i/(d_i+\hat{\theta}_i^{[s]}(\lambda))^2>\tr(\vS\vG_\lambda\vS^{-1}\vG_\lambda')+d_i^2\vz_i'\vG_\lambda^2\vS^{-1}\vG_\lambda^{'2}\vz_i/(d_i+\hat{\theta}_i^{[s]}(\lambda))^2$ is satisfied.
This condition is equal to $d_i^2\{\vz_i'\vG_\lambda^2\vS^{-1}\vG_\lambda'\vz_i-\vz_i'\vG_\lambda^2\vS^{-1}\vG_\lambda'^2\vz_i\}/(d_i+\hat{\theta}_i^{[s]}(\lambda))^2>\tr(\vS\vG_\lambda\vS^{-1}\vG_\lambda') (>0)$.
From the assumption $\hat{\theta}_i^{[s]}(\lambda)\ge\hat{\theta}_i^{[s-1]}(\lambda)$, we note that $d_i^2/(d_i+\hat{\theta}_i^{[s-1]}(\lambda))^2 \ge d_i^2/(d_i+\hat{\theta}_i^{[s]}(\lambda))^2$.
Thus, we obtain
\begin{align*}
\dfrac{d_i^2}{(d_i\!+\!\hat{\theta}_i^{[s-1]}(\lambda))^2}\{\vz_i'\vG_\lambda^2\vS^{-1}\vG_\lambda'\vz_i-\vz_i'\vG_\lambda^2\vS^{-1}\vG_\lambda'^2\vz_i\} \ge \dfrac{d_i^2}{(d_i\!+\!\hat{\theta}_i^{[s]}(\lambda))^2}\{\vz_i'\vG_\lambda^2\vS^{-1}\vG_\lambda'\vz_i-\vz_i'\vG_\lambda^2\vS^{-1}\vG_\lambda'^2\vz_i\}.
\end{align*}
Hence, $d_i^2\{\vz_i'\vG_\lambda^2\vS^{-1}\vG_\lambda'\vz_i-\vz_i'\vG_\lambda^2\vS^{-1}\vG_\lambda'^2\vz_i\}/(d_i+\hat{\theta}_i^{[s-1]}(\lambda))^2>\tr(\vS\vG_\lambda\vS^{-1}\vG_\lambda)$ is also satisfied.
This result means that $\hat{\theta}_i^{[s]}(\lambda)=0.$

Hence, $\hat{\theta}_i^{[s+1]}(\lambda)\ge\hat{\theta}_i^{[s]}(\lambda)$ if $\hat{\theta}_i^{[s]}(\lambda)\ge\hat{\theta}_i^{[s-1]}(\lambda)$ for some $s \ge 2$ under all situations.

Since $\hat{\theta}_i^{[2]}(\lambda)\ge\hat{\theta}_i^{[1]}(\lambda)$ and this above results, we proved Theorem 2.1 from mathematical induction.
\fi

\vskip 1pt
\begin{center}

\end{center}

\end{document}